\newcommand{\fakeparagraph}[2]{\par\noindent\textbf{#1}\hspace{1em}#2}
\newcommand{\Desc}[2]{\State \makebox[2em][l]{#1}#2}
\algnewcommand\And{\textbf{and} }
\algnewcommand\Or{\textbf{or} }
\newcommand{\problemcaptionkludge}{\rule[-.3\baselineskip]{0pt}{\baselineskip}}
\theoremstyle{plain}
\newtheorem{theorem}{Theorem}[section]
\newtheorem{lemma}[theorem]{Lemma}
\newtheorem{corollary}[theorem]{Corollary}
\newtheorem{observation}[theorem]{Observation}
\theoremstyle{definition}
\newtheorem{definition}[theorem]{Definition}
\theoremstyle{remark}
\newcommand{\neigh}{\mathcal{N}}
\newcommand{\graph}{\mathcal{G}}
\newcommand{\mender}{\mathcal{M}}
\newcommand{\tree}{\mathcal{T}}
\newcommand{\final}{\mathcal{F}}
\newcommand{\labels}{\Sigma}
\newcommand{\parts}{\mathcal{P}}
\newcommand{\NN}{\ensuremath{\mathbb{N}}}
\newcommand{\problem}{\Pi}
\DeclareMathOperator{\dom}{dom}
\DeclareMathOperator{\avg}{avg}
\DeclareMathOperator{\lcm}{lcm}
\DeclareMathOperator{\poly}{poly}
\DeclareMathOperator{\diff}{diff}
\newcolumntype{?}{!{\vrule width 1pt}}
\DeclareMathOperator{\minvolume}{\exists\textup{\textsf{MVol}}}
\DeclareMathOperator{\avgvolume}{\textup{\textsf{EMVol}}}
\DeclareMathOperator{\maxvolume}{\textup{\textsf{DMVol}}}
\DeclareMathOperator{\radius}{\textup{\textsf{MRad}}}
\DeclareMathOperator{\mendprob}{\textup{\textsf{Mend}}}
\DeclareMathOperator{\boldminvolume}{\exists\textbf{\textup{\textsf{MVol}}}}
\DeclareMathOperator{\boldavgvolume}{\textbf{\textup{\textsf{EMVol}}}}
\DeclareMathOperator{\boldmaxvolume}{\textbf{\textup{\textsf{DMVol}}}}
\DeclareMathOperator{\boldradius}{\textbf{\textup{\textsf{MRad}}}}
\newcommand{\radiustitle}{\texorpdfstring{\boldmath\(\radius\)}{MRad}}
\newcommand{\minvolumetitle}{\texorpdfstring{\boldmath\(\minvolume\)}{\textexists MVol}}
\newcommand{\avgvolumetitle}{\texorpdfstring{\boldmath\(\avgvolume\)}{EMVol}}
\newcommand{\maxvolumetitle}{\texorpdfstring{\boldmath\(\maxvolume\)}{DMVol}}
\newcommand{\neighsizetitle}{\texorpdfstring{\boldmath\(|\neigh_{\radius}|\)}{neighborhood size}}
\newcommand{\minvolumepolynomialtitle}{\texorpdfstring{\boldmath\(\minvolume = n^{\Theta(1)}\)}{\textexists MVol is polynomial}}
\newcommand{\minvolumepolylogtitle}{\texorpdfstring{\boldmath\(\minvolume = (\log n)^{\Theta(1)}\)}{\textexists MVol is polylogarithmic}}
\newcommand{\set}[1]{\ensuremath{\left\{#1\right\}}}
\newcommand{\myaff}[1]{\,$\cdot$\, {\small #1}\par\smallskip}
\newenvironment{myabstract}
{\list{}{\listparindent 1.5em%
        \itemindent    \listparindent
        \leftmargin    0cm
        \rightmargin   0cm
        \parsep        0pt}%
    \item\relax}
{\endlist}
\newenvironment{mycover}
{\list{}{\listparindent 0pt
        \itemindent    \listparindent
        \leftmargin    0cm
        \rightmargin   0cm
        \parsep        0pt}%
    \raggedright
    \item\relax}
{\endlist}
\begin{document}

\begin{mycover}
{\fontsize{19.7pt}{19.5}\selectfont\bfseries Mending Partial Solutions with Few Changes \par}\hspace{-2pt}
\bigskip
\bigskip
\bigskip

\textbf{Darya Melnyk}
\myaff{Aalto University}

\textbf{Jukka Suomela}
\myaff{Aalto University}

\textbf{Neven Villani}
\myaff{Aalto University and École normale supérieure Paris-Saclay}

\end{mycover}
\medskip

\begin{myabstract}
\fakeparagraph{Abstract.}
In this paper, we study the notion of mending, i.e. given a partial solution to a graph problem, we investigate how much effort is needed to turn it into a proper solution. For example, if we have a partial coloring of a graph, how hard is it to turn it into a proper coloring?

In prior work (SIROCCO 2022), this question was formalized and studied from the perspective of \emph{mending radius}: if there is a hole that we need to patch, how \emph{far} do we need to modify the solution? In this work, we investigate a complementary notion of \emph{mending volume}: how \emph{many} nodes need to be modified to patch a hole?

We focus on the case of locally checkable labeling problems (LCLs) in trees, and show that already in this setting there are two infinite hierarchies of problems: for infinitely many values $0 < \alpha \le 1$, there is an LCL problem with mending volume $\Theta(n^\alpha)$, and for infinitely many values $k \ge 1$, there is an LCL problem with mending volume $\Theta(\log^k n)$. Hence the mendability of LCL problems on trees is a much more fine-grained question than what one would expect based on the mending radius alone.

We define three variants of the theme: (1)~\emph{existential} mending volume, i.e., how many nodes need to be modified, (2)~\emph{expected} mending volume, i.e., how many nodes we need to explore to find a patch if we use randomness, and (3)~\emph{deterministic} mending volume, i.e., how many nodes we need to explore if we use a deterministic algorithm. We show that all three notions are distinct from each other, and we analyze the landscape of the complexities of LCL problems for the respective models.
\end{myabstract}
\medskip

\section{Introduction}

If we have a partial solution to a graph problem, how much effort is needed to turn it into a proper solution? For example, if we have a partial coloring of a graph, how hard is it to turn it into a proper coloring? In this work we present three formalisms that capture the essence of this question; the first one is purely graph-theoretic while the other two are algorithmic:
\begin{enumerate}
    \item \textbf{Existential mending volume:} How many labels do we need to \emph{change} to ``patch a hole'' in the solution?
    \item \textbf{Expected mending volume:} In expectation, how many nodes do we need to \emph{explore} to learn enough about the input graph so that we can ``patch a hole''?
    \item \textbf{Deterministic mending volume:} In the worst case, how many nodes do we need to explore to learn enough about the input graph so that we can ``patch a hole''?
\end{enumerate}

We will define these concepts formally in Definition~\ref{def:existential_volume} and \ref{def:complexity_generalized}, but for now the following informal description will suffice to understand what we mean by ``patching a hole''. We are given a graph $G$, a partial solution $\lambda$ for some graph problem $\problem$, and some node $v$ that is unlabeled in $\lambda$. We would like to find a new solution $\lambda'$ such that node $v$ is labeled in $\lambda'$, and also all nodes that were already labeled in $\lambda$ remain labeled in $\lambda'$. We say that $\lambda'$ is a \emph{mend} of $\lambda$ at node $v$; we have ``patched a hole'' at $v$. Now the key complexity measure is the Hamming distance between $\lambda$ and $\lambda'$, i.e., the number of nodes that we had to change. If for any $G$, $\lambda$, and $v$ there is a mend $\lambda'$ that is within distance $T_1(n)$ from $\lambda$, we say that the existential mending volume of $\problem$ is at most $T_1(n)$. If there is a randomized algorithm that after exploring in expectation $T_2(n)$ nodes around $v$ can find a mend, we say that the expected mending volume is at most $T_2(n)$, and if there is a deterministic algorithm that after exploring in the worst case $T_3(n)$ nodes around $v$ can find a mend, we say that the deterministic mending volume is at most $T_3(n)$.

\subsection{Motivation}

Mending volume is intimately connected with the analysis of \emph{local search}. In particular, if the mending volume of problem $\problem$ is bounded by $T$, then we can start with any partial solution and walk towards a valid solution so that at each step we only need to consider modifications in which we change $T$ labels. 

Moreover, mending volume naturally captures the \emph{reconfiguration effort} in computer systems. The system is initially in a valid state, but the physical structure of the system changes (e.g., a new component is installed), leading to an invalid state $\lambda$ in which at least one component is unable to fulfill its task. We need to find a new configuration $\lambda'$ in which all components again function correctly. Further, in order to minimize service disruptions, we should also ensure that $\lambda'$ is as close to $\lambda$ as possible.

\subsection{Contributions}

It is easy to come up with graph problems where mending is trivial or very hard---these are problems with existential mending volume $O(1)$ or $\Theta(n)$. The work of Panconesi and Srinivasan~\cite{panconesi95delta} shows that the mending volume of $\Delta$-coloring in a graph of maximum degree $\Delta \ge 3$ is $O(\log n)$. But is the mending volume for problems of this flavor always $O(1)$, $\Theta(\log n)$, or $\Theta(n)$?

We formalize this question by considering \emph{locally checkable labeling problems} (LCLs), as defined by Naor and Stockmeyer~\cite{Naor1995}; these are problems in which we are given a graph with some maximum degree $\Delta$, and the task is to label the nodes with labels from some finite set $\labels$, subject to some local constraints. Graph coloring with $k = O(1)$ colors in a graph of maximum degree $\Delta = O(1)$ is a model example of an LCL problem.

We show that already in the case of trees, it is possible to construct two infinite hierarchies of problems: for infinitely many values $0 < \alpha \le 1$, there is an LCL problem with mending volume $\Theta(n^\alpha)$, and for infinitely many values $k \ge 1$, there is an LCL problem with mending volume $\Theta(\log^k n)$.

This shows that there is a striking difference between existential mending volume that we study here and the \emph{mending radius} that was defined recently in prior work~\cite{local_mending}. In trees, the mending radius of any LCL problem is known to be $O(1)$, $\Theta(\log n)$, or $\Theta(n)$. Hence mending volume makes it possible to classify LCL problems into infinitely many classes, while mending radius only leads to three classes of problems.

We also explore the landscape of mending volume beyond the case of trees; the results are summarized in Table~\ref{tab:complexities_overview}. In Section~\ref{sec:algorithmic_mending_volume}, we then further study the relation between existential, expected, and deterministic mending volumes. We show that there are LCL problems in which all three notions coincide, but that there are also problems that separate existential and randomized mending volumes, as well as problems that separate randomized and deterministic mending volumes. That is, in the worst-case partial solutions of some LCL problems, the most efficient mend can be well-hidden in the sense that it is hard to find by probing a graph. A summary of these results is presented in Table~\ref{tab:model_comparison}; we refer to Table~\ref{tab:complexities_complete} in Appendix~\ref{sec:landscape_appendix} for more details on the landscape of possible mending volumes.

\begin{table}[tb]
    \newcommand{\ok}{{\textcolor{blue!80!black}{\checkmark}}}
    \newcommand{\ko}{{\textcolor{red!80!black}{\textbf{\texttimes}}}}
    \newcommand{\uk}{{\textcolor{black}{?}}}
    \newcommand{\cs}{\qquad}
    \centering
    \caption{An overview of the landscape of existential mending volume ($\minvolume$) for LCL problems on the classes of paths, trees and general graphs. Here \ok{} denotes that LCL problems with this mending volume exist, \ko{} denotes that such LCL problems cannot exist, and \uk{} denotes an open question. See Table~\ref{tab:complexities_complete} in Appendix~\ref{sec:landscape_appendix} for the landscape of other notions of mending.}
    \label{tab:complexities_overview}
    \begin{tabular}{lc@{\cs}c@{\cs}c@{\cs}c@{\cs}c@{\cs}c@{\cs}c}
    \toprule
    Setting & \multicolumn{7}{l}{Possible mending volumes} \\ \cmidrule{2-8}
    & \(O(1)\) & \ldots & \(\Theta(\log n)\) & \(\Theta(\log^k n)\) & \ldots & \(\Theta(n^{\alpha})\) & \(\Theta(n)\) \\
    &&&& \makebox[0pt]{\footnotesize $k > 1$} && \makebox[0pt]{\footnotesize $0 < \alpha < 1$} \\
    \midrule
    Paths and cycles & \ok  & \ko  & \ko  & \ko  & \ko  & \ko  & \ok \\
    Rooted trees     & \ok  & \ko  & \ok  & \ok  & \ko  & \ok  & \ok \\
    Trees            & \ok  & \ko  & \ok  & \ok  & \uk  & \ok  & \ok \\
    General graphs   & \ok  & \ko  & \ok  & \ok  & \uk  & \ok  & \ok \\
    \bottomrule
    \end{tabular}
\end{table}

\begin{table}[tb]
    \newcommand{\ok}{\mathrel{\color{blue!80!black}\sim}}
    \newcommand{\ko}{\mathrel{\color{red!80!black}\nsim}}
    \newcommand{\va}{\radius}
    \newcommand{\vb}{\minvolume}
    \newcommand{\vc}{\avgvolume}
    \newcommand{\vd}{\maxvolume}
    \newcommand{\ve}{|\neigh_{\radius}|}
    \centering
    \caption{An overview of the hierarchy of different measures of mending: $\va$ is the mending radius as defined in~\cite{local_mending}, $\vb$ is the existential mending volume, $\vc$ is the expected mending volume, $\vd$ is the deterministic mending volume, and $\ve$ is the maximum number of nodes in a neighborhood of radius $\va$. In the table, $x \ok y$ indicates that $x$ and $y$ are asymptotically equivalent for all LCL problems, and $x \ko y$ indicates that there is at least one LCL problem for which this does not hold. Whether $\vb \ok \vc$ holds in trees is an open question.}
    \label{tab:model_comparison}
    \begin{tabular}{lccccccccc}
    \toprule
    Any graph family & $\va$ & $\le$ & $\vb$ & $\le$ & $\vc$ & $\le$ & $\vd$ & $\le$ & $\ve$ \\
    \midrule
    Paths and cycles & $\va$ & $\ok$ & $\vb$ & $\ok$ & $\vc$ & $\ok$ & $\vd$ & $\ok$ & $\ve$ \\
    Infinite trees   & $\va$ & $\ko$ & $\vb$ & $\ok$ & $\vc$ & $\ko$ & $\vd$ & $\ok$ & $\ve$ \\
    Trees            & $\va$ & $\ko$ & $\vb$ &       & $\vc$ & $\ko$ & $\vd$ & $\ok$ & $\ve$ \\
    General          & $\va$ & $\ko$ & $\vb$ & $\ko$ & $\vc$ & $\ko$ & $\vd$ & $\ok$ & $\ve$ \\
    \midrule
    Reference
    && \makebox[0pt]{Sect.~\ref{sec:volume_polynomial}}
    && \makebox[0pt]{Sect.~\ref{sec:separation_min_avg_general}}
    && \makebox[0pt]{Sect.~\ref{sec:separation_avg_max}}
    && \makebox[0pt]{Sect.~\ref{sec:landscape_maxvolume}}
    \\
    \bottomrule
    \end{tabular}
\end{table}

\section{Related work}

One of the first papers that make explicit use of the fact that some LCL problems have a logarithmic mending volume is by Panconesi and Srinivasan~\cite{panconesi95delta}. They compute a $\Delta$-coloring of a graph by recoloring an augmenting path of length up to $O(\log n)$ whenever there is a conflict. However, their main interest is solving the problem in a distributed message passing model and they therefore mainly focus on the mending radius instead of the mending volume of this problem.

The idea of refining the radius measure into a volume measure in the study of the landscape of LCL problems can be attributed to Rosenbaum and Suomela~\cite{far_vs_wide}, who show similarities and differences between the models. The volume complexity for LCL problems has been further refined by Grunau at al.~\cite{randomized_local_lll}. However, the focus of these papers is only on solvability (constructing a solution from nothing) rather than mendability (editing a partial solution to the closest complete solution) of a problem. They nevertheless highlight the fact that merely looking at the radius complexity does not capture all details of what information within that radius is actually necessary, and some problems that have the same radius complexity exhibit very different volume complexities. 

\subparagraph{Mending radius.}

Balliu et al.~\cite{local_mending} introduced the first formal graph-theoretic notion of mending radius. The authors show how to use mending as a tool for algorithm design and analyze the complexities of mending on paths, rooted trees and grids. 

In contrast to the definition of mending radius, our definition of mending volume captures more complexity classes of problems. A concrete example of the mending volume being more accurate than the mending radius is the group of three problems \(R_1\), \(R_2\), \(R_3\) defined in Problem~\ref{prob:R_i}.
\begin{problemdef}[b]
    \caption{\(R_i\)\problemcaptionkludge}
    \label{prob:R_i}
    \vspace{-\topsep}\begin{description}[noitemsep]
        \item[Input:] A balanced rooted ternary tree
        \item[Labels:] \textsf{red} and \textsf{white}
        \item[Task:] Color the vertices so that the root is \textsf{red}, and every \textsf{red} vertex has at least \(i\) \textsf{red} children.
    \end{description}\vspace{-\topsep}
\end{problemdef}
Assume that we start with a partial solution where the root is uncolored and all other nodes are colored \textsf{white}. Naturally, any mending algorithm must color the root \textsf{red}, and start updating the descendants of the root down to the leaves. The mending radius of all three problems is therefore $\Theta(\log_3 n)$. The mending volume, however, differs for all three problems $R_i$, and the corresponding complexities are discussed in Figure~\ref{fig:comparison_tree_coloring}.

\begin{figure}[tb]
    \includegraphics[width=\linewidth]{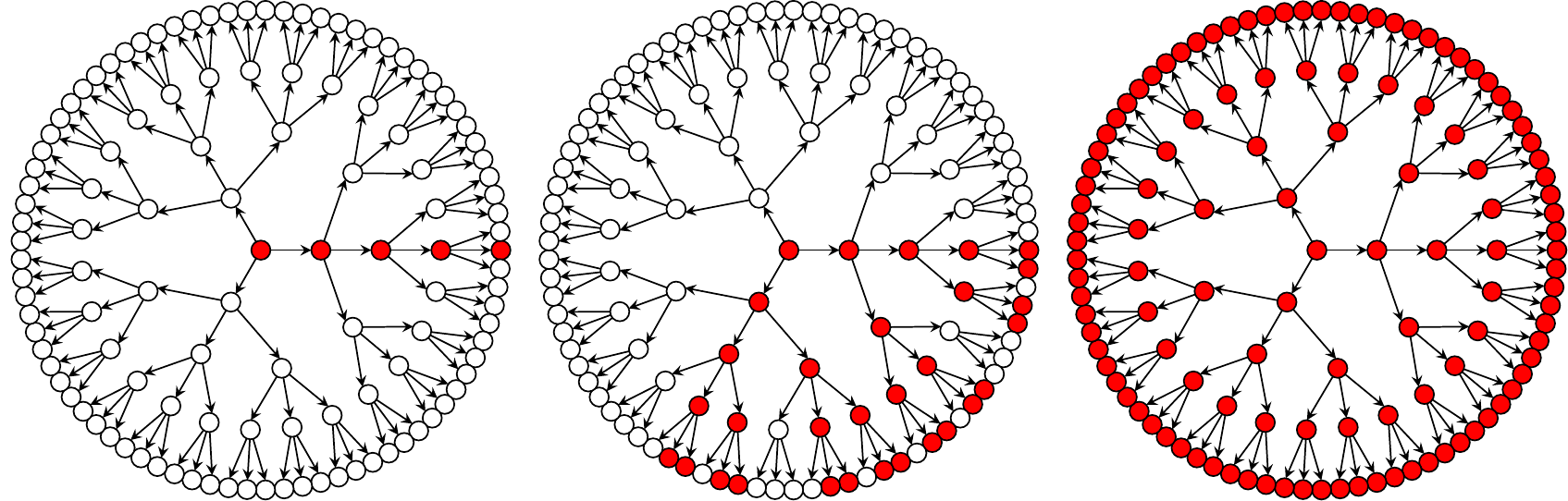}
    \caption{From left to right, solutions of \(R_1\), \(R_2\), \(R_3\) (as defined in Problem~\ref{prob:R_i}) with the least number of \textsf{red} labels are visualized. In the case of \(R_2\) (middle), each red vertex starting from the root in the center has two of its three children colored red, and this continues down to the leaves. The radius in this example is $4$ and its growth rate as the graph gets larger is \(\Theta(\log_3 n)\), the volume is \(2^{4+1} - 1\) which grows as \(\Theta(n^{\log 2 / \log 3})\). On each of these three solutions the set of vertices recolored \textsf{red} has the same radius \(\Theta(\log_3 n)\), yet the volume of the \textsf{red} zone is \(\Theta(\log_3 n)\) for \(R_1\), \(\Theta(n^{\log 2 / \log 3})\) for \(R_2\), and \(\Theta(n)\) for \(R_3\).} 
    \label{fig:comparison_tree_coloring}
\end{figure}

Also other papers made use of mending radius, mainly as an algorithm design tool. Chechik and Mukhtar~\cite{distributed_coloring} design an algorithm for $6$-coloring planar graphs using the observation that some small structures can be properly colored for any proper coloring of their surrounding vertices. Such structures can be removed from the graph temporarily while coloring the rest of the vertices. Similar observations have been made for computing a $\Delta$-coloring~\cite{panconesi95delta} and solving an edge-orientation with maximum out-degree $(1+\varepsilon)a$~\cite{10.1145/3465084.3467908}. 
Recently, it has been shown that mending algorithms with a constant radius can also be transformed into self-stabilizing algorithms in anonymous networks~\cite{self_stab_mending}.
On the other hand, there were also attempts to extract an explicit notion of mending, although using different definitions of partial solutions and complexity measures. This includes for example König and Wattenhofer~\cite{local_fixing}, and Kutten and Peleg~\cite{tight_fault_locality}.
König and Wattenhofer~\cite{local_fixing} consider only faults that are an addition or a deletion of a single vertex or edge at a time, and hence feature only at most a constant number of unlabeled vertices. Kutten and Peleg~\cite{tight_fault_locality} are interested in the time needed to compute a complete solution as a function of the initial number of failures.

\subparagraph{Local search.}

The idea of mending volume is closely related to local search in optimization problems (in the context of traditional centralized algorithms). Often one starts with a suboptimal solution and tries to converge to a better solution from there. Usually a problem is first solved by computing some possibly random initial variable assignment that satisfies the constraints, see e.g.~\cite{CHAMS1987260,GALINIER20062547,10.1007/978-3-540-74565-5_31}. Then, a local search algorithm is applied to find a better solution in the vicinity of the previous one.  

A classic application of local search in combinatorial optimization is the traveling salesman problem; local search is often applied to hard problems in order to achieve a good approximation of the optimal solution~\cite{10.2307/j.ctv346t9c, Angel2006}. On the negative side, Johnson et al.~\cite{JOHNSON198879} showed that an exponential number of iterations may be needed if the cost function can take exponential values. Ausiello and Protasi~\cite{AUSIELLO199573} later defined the class of guaranteed local optima (GLO) problems where the values of the cost function are bounded by a polynomial and showed that such problems can be solved in a polynomial number of iterations. Halld\'{o}rsson~\cite{10.5555/313651.313687} showed that local search can help to improve worst-case approximation guarantees by starting with a greedy solution and improving it locally using local search. He provides approximation results for various problems, such as the independent set, $k$-dimensional matching and $k$-set packing in nearly-linear sequential time.
Chandra and Halld\'{o}rsson~\cite{CHANDRA2001223} later showed an $2(k + 1)/3$ approximation algorithm for the weighted $k$-set packing problem, thus improving a previous result from Bafna et al.~\cite{BAFNA199641}, and Arkin and Hassin~\cite{doi:10.1287/moor.23.3.640}.

\section{Preliminaries}

Our definition of the mending volume is built along the lines of the definition of the mending radius in~\cite{local_mending}: we define the mending volume as a measure entirely independent of any distributed computing model and we place ourselves in the context of Locally Checkable Labeling problems (LCLs) first introduced in~\cite{Naor1995}. We use the same definition of partial solutions as~\cite{local_mending} in order to make our results comparable. A reader who is familiar with the notions of graph labeling problems---and LCLs in particular---as well as with the specific definition of partial solutions from~\cite{local_mending} may skip directly to the next section in which we introduce a formal definition of the mending volume.

\subsection{Locally checkable labelings}

LCLs are labeling problems on bounded-degree graphs. In these problems, an input graph with maximum degree \(\Delta = O(1)\) is given and the task is to produce an assignment of labels to vertices in a way that satisfies some predetermined local constraints. The specification of an LCL problem is done by means of a local verifier.
\begin{definition}[Local verifier]
    A \emph{verifier} \(\phi\) is a function that maps tuples \((G, \lambda, v)\) to \(\set{\textsf{happy}, \textsf{unhappy}}\), where \(v\) is a vertex and \(\lambda\) a labeling of \(G\). We say that the verifier \(\phi\) \textit{accepts} \(\lambda\) if \(\phi(G, \lambda, v) = \textsf{happy}\) for all \(v\), otherwise it \textit{rejects} \(\lambda\).

    In addition, \(\phi\) is local if, for some constant radius \(r\), whenever \((G_1,\lambda_1)\) and \((G_2,\lambda_2)\) coincide over the radius-\(r\) neighborhood of \(v_1\) and \(v_2\) then they have the same image according to \(\phi\). That is, \((G_1,\lambda_1)_{|\neigh_r(v_1)} \simeq (G_2, \lambda_2)_{|\neigh_r(v_2)}\) implies \(\phi(G_1, \lambda_1, v_1) = \phi(G_2, \lambda_2, v_2)\)
\end{definition}

An LCL problem is entirely characterized by a finite set of labels and a local verifier.

\begin{definition}[Locally Checkable Labeling]
    A \emph{Locally Checkable Labeling} problem \(\problem\) is represented by a finite set of labels \(\labels\), a class of input graphs \(\graph\), and a local verifier \(\phi\).
    An instance of \(\problem\) is a graph \(G\in\graph\).
    A solution is a labeling \(\lambda\) of \(G\) over \(\labels\) that is accepted by \(\phi\).
\end{definition}

\subsection{Partial solutions}

Mending takes as input an incomplete labeling and extends it into one that is a little closer to being complete. Since graph labelings were defined to be complete over all vertices, the most natural way to define partial solutions is to extend the set of labels with one fresh label \(\bot\) that is interpreted as ``unlabeled'', and adapt the local constraints to allow labelings that involve this new label. We will often refer to vertices that are labeled \(\bot\) simply as ``unlabeled vertices'' or ``holes''.

A desirable definition of partial solutions should satisfy the following three properties:
\begin{enumerate}
    \item A partial solution without any hole is a complete solution;
    \item the empty labeling (the constant function \(\lambda_\bot: \_ \mapsto \bot\)) is a partial solution;
    \item a sub-solution of a partial solution is also a partial solution. That is, if \(\lambda\) is a partial solution then any labeling \[\lambda_S: x\mapsto\left\{\begin{array}{ll} \lambda(x) & \text{if \(x\in S\)} \\ \bot & \text{otherwise} \\ \end{array}\right.\] is a partial solution.
\end{enumerate}

As stated in~\cite{local_mending}, the following is a simple way to satisfy all of these constraints: extend the verifier \(\phi'\) to be \textsf{happy} whenever an unlabeled vertex is visible in the radius-\(r\) neighborhood, otherwise fall back to the same rules as \(\phi\).

\begin{definition}[Partial solution]\label{def:partial_solution}
    For \(\problem = (\labels, \graph, \phi)\), where \(\phi\) has radius \(r\), define a relaxation \(\problem^* = (\labels^* \coloneqq \labels \sqcup \set{\bot}, \graph, \phi^*)\) of \(\problem\) to allow empty labels.

    For a labeling \(\lambda'\) over \(\labels^*\), define \(\phi^*(G, \lambda', v)\) as follows: if there exists a node \(u_\bot\) within distance \(r\) of \(v\) such that \(\lambda'(u_\bot) = \bot\), then \(\phi^*(G, \lambda', v) \coloneqq \textsf{happy}\); otherwise, let \(\lambda\) be any labeling over \(\labels\) that agrees with \(\lambda'\) on \(G_{|\neigh_v(r)}\) and set \(\phi^*(G, \lambda', v) \coloneqq \phi(G, \lambda, v)\).

    We define \(\dom_{\labels}(\lambda')\) to be the set of vertices that \(\lambda'\) labels with labels from \(\labels\). A labeling (resp. solution) of \(\problem^*\) is called a \emph{partial labeling} (resp. \emph{partial solution}) of \(\problem\).
\end{definition}

One can easily check that all the desirable properties stated above are satisfied by Definition~\ref{def:partial_solution}; this fact is also proven in~\cite{local_mending}. Note that this definition of partial solutions has a notion of locality that is consistent between labelings and partial labelings: the verifiers \(\phi\) and \(\phi^*\) have the same locality radius.

We can now define what it means to mend a partial solution: a mend of \(\lambda\) is a new partial solution with one specific vertex no longer labeled \(\bot\), and no additional \(\bot\) labels.

\begin{definition}[Mend]\label{def:mend}
    For a partial solution \(\lambda\) of \(\problem\) on an instance \(G\), we say that \(\lambda'\) is a \emph{mend} of \(\lambda\) at \(v\in G\) if the following hold:
    \begin{description}
        \item[Validity:]
            \(\lambda'\) is a partial solution.
        \item[Progress:]
            \(\dom_\labels(\lambda) \cup \set{v} \subseteq \dom_\labels(\lambda')\), that is, no \(\bot\) was added and \(v\) is no longer labeled \(\bot\).
    \end{description}
\end{definition}
The mending problem \(\mendprob(\problem)\) associated with an LCL \(\problem\) is the following task: given \(G \in \graph\), \(\lambda\) solution of \(\problem^*\) and \(v\) hole of \(\lambda\),
produce \(\lambda'\) a mend of \(\lambda\) at \(v\).

\section{Complexity landscape of existential mending volume}\label{sec:existential_landscape}

Having defined LCLs and partial solutions, we can now introduce mending volume. In this section, we consider an existential definition of the mending volume. This definition (see Section~\ref{sec:existential_volume_def}) is a purely graph-theoretic measure of the optimal solution for a worst-case instance of a mending problem. Later, in Section~\ref{sec:mending_rooted_trees}, we develop a technique for designing LCLs that have a specific existential mending volume on infinite rooted trees. In Section~\ref{sec:transfer_finite_trees}, we show that these problems can be transferred to finite and non-oriented trees while keeping the same mending volume complexity. Finally, in Sections~\ref{sec:volume_polynomial} and~\ref{sec:volume_polylog}, we apply these design techniques to obtain problems that have mending volume \(\Theta(n^\alpha), 0<\alpha<1\) or \(\Theta(\log^k n), k\in\mathbb{N}^*\), thereby providing examples of complexities that the mending volume exhibits that were not observed previously in the study of the mending radius.

\subsection{Existential mending volume: Definition}\label{sec:existential_volume_def}

For two labelings $\lambda$ and $\lambda'$, we define \(\diff(\lambda, \lambda') \coloneqq \set{v\!:\ \lambda(v) \ne \lambda'(v)}\) such that \(|\diff(\lambda,\lambda')|\) is the Hamming distance between two partial solutions. We define the existential mending volume of a problem \(\problem\) as the distance between the partial solution and the optimal mend for the worst-case instance \((G,\lambda,v)\) of \(\mendprob(\problem)\).
Here, \(G\) is an input graph from the family on which \(\problem\) is defined, \(\lambda\) is a partial solution, and \(v\) is a hole s.t. \(\lambda(v) = \bot\) at which \(\lambda\) must be mended.
\begin{definition}[Existential mending volume]\label{def:existential_volume}
    \[\minvolume(\problem) \coloneqq \max_{G,\lambda,v}\min\set{|\diff(\lambda,\lambda')|\!:\ \text{\(\lambda'\) mend of \(\lambda\) at \(v\)}}\]
\end{definition}

\subsection{Mending in infinite rooted trees}\label{sec:mending_rooted_trees}

In the following sections, we will establish a landscape of possible complexities that the existential mending volume can exhibit. For a summary, please refer to Table~\ref{tab:complexities_overview} that was introduced earlier. To this end, we describe examples of LCL problems that have logarithmic, polylogarithmic and polynomial existential mending volumes. The statement of these examples is made easier by the fact that all problems we show are of a specific type and we refer to them as \textit{propagation problems}. The complexity analysis of problems in this class is very straightforward for two reasons: (1) they admit a simple matrix description by an encoding shown in Section~\ref{sec:matrix_representation}, and (2) we only need to study their behavior in infinite trees thanks to results from Section~\ref{sec:transfer_finite_trees}. The advantage of infinite trees is that the complexity analysis is simplified by the absence of cycles, high-degree nodes, leaves, and other irregularities of the input graph. This restriction of only considering infinite regular rooted trees also has the complementary effect of illustrating that even simple problems already exhibit a rich variety of mending volume complexities. Since any propagation problem with mending volume \(T\) can be transformed into a problem on general trees or graphs with the same mending volume \(T\), our choice does not restrict the generality of our results.

\subsubsection{Propagation problems}

In this section, we define propagation problems on infinite rooted trees, with the goal to use them as a design tool for LCLs that exhibit specific mending volume complexities.

\begin{definition}[Infinite \(\Delta\)-regular rooted trees]
    We call trees that satisfy the following properties
    \begin{itemize}
        \item there are infinitely many vertices;
        \item exactly one vertex is distinguished as the root;
        \item every vertex admits a unique directed path to the root;
        \item every vertex has exactly \(\Delta\) incoming edges.
    \end{itemize}
    infinite \(\Delta\)-regular rooted trees, or simply infinite rooted trees when \(\Delta\) is clear from the context.
\end{definition}

Note that this class of graphs only consists of a single graph for a fixed \(\Delta\). On this class of input graphs, we define propagation problems as any LCL problem that is constructed according to the procedure explained in Definition~\ref{def:construct_propagation}.
\begin{definition}[Construction of a propagation problem]\label{def:construct_propagation}
    On the label set \(\labels\), distinguish two special labels---the initial label \(l_0\), and the wildcard label \(l_{-}\). Let \(\labels' \coloneqq\labels\setminus\set{l_-}\). Choose some \(\mu : \labels' \times \labels'\to \mathbb{N}\) and some \(\Delta \geq \max_{l\in\labels'} \sum_{l'\in\labels'} \mu(l,l')\).
    This defines an LCL on infinite \(\Delta\)-regular rooted trees, with locality 1, where the radius-1 neighborhood of \(v\) labeled by \(\lambda\) is accepted if all of the following constraints are satisfied:
    \begin{itemize}
        \item \(\lambda(v) = l_0\) if \(v\) is the root;
        \item if \(\lambda(v) = l \ne l_-\) then, for every \(l'\in\labels'\), there are at least \(\mu(l,l')\) children of \(v\) labeled \(l'\).
    \end{itemize}
\end{definition}

In other words, we only allow labeling constraints of the form ``any vertex labeled \(l\) must have at least \(\mu(l,l')\) children labeled \(l'\)'' or ``the root must be labeled \(l_0\)''. The requirement \(\Delta \geq \max_{l\in\labels'} \sum_{l'\in\labels'} \mu(l, l')\) is chosen such that all constraints are compatible with each other.
Note that there are no constraints involving the wildcard label \(l_-\) in this definition: it may appear as a child of any other label, and it may have any labels as its own children. In particular, the labeling where the root is unlabeled and all non-root vertices are labeled \(l_-\) is always a valid partial solution. We will show in Corollary~\ref{cor:matrix_worst_case} that this labeling is the worst-case instance for most propagation problems.

Since the input graphs on which these problems are defined are infinite, and since these problems often produce mends that have infinite volume, we study the volume not in terms of the total number of modified labels but in terms of the number of modified labels at distance at most \(d_{\max}\) from the hole.

\subsubsection{Matrix representation}\label{sec:matrix_representation}

Let \(M\) be a matrix of size \(|\labels'|\times|\labels'|\) defined as \(M[l,l'] = \mu(l,l')\). Observe that the coefficient \(M^d[l,l']\) of the $d$-th power of $M$ is the tightest lower bound on how many children labeled \(l'\) a vertex labeled \(l\) must have at distance \(d\) for a complete solution to be accepted. Indeed, by induction, a vertex labeled \(l\) must have at least \(M[l,l']\) children labeled \(l'\) at distance 1; it must then have at least \(\sum_{l''\in\labels'} M^d[l,l'']M[l'',l'] = M^{d+1}[l,l']\) children labeled \(l'\) at distance \(d+1\). We argue in Theorem~\ref{thm:matrix_bounds} that this provides bounds for \(\minvolume\).

We write \(\lVert L_l M^d \rVert \coloneqq \sum_{l'\in\labels'} M^d[l,l']\), where \(L_l\) is the line vector with a \(1\) only in position \(l\). We show in Theorem~\ref{thm:matrix_bounds} that this quantity expresses both upper and lower bounds on the mending volume up to distance \(d_{\max}\) of the propagation problem described by \(M\).
\begin{theorem}[Mending complexity of a propagation problem]\label{thm:matrix_bounds}
    The mending volume up to distance \(d_{\max}\) of a propagation problem represented by \(M\) is between \(\sum_{d=0}^{d_{\max}} \lVert L_{l_0} M^d \rVert\) and \(\max_{l\in\labels'}\sum_{d=0}^{d_{\max}} \lVert L_l M^d \rVert\)
\end{theorem}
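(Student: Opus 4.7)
The plan is to prove the two inequalities separately: the lower bound comes from a single carefully chosen instance, while the upper bound is obtained by an explicit top-down mending procedure. Both arguments rely on the identity $M^{d+1}[l,l''] = \sum_{l'\in\labels'} M^d[l,l']\cdot M[l',l'']$ together with the fact that in a tree the subtrees hanging below distinct vertices at the same level are disjoint.

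For the lower bound I would take $(G,\lambda,v)$ where $v$ is the root of the infinite $\Delta$-regular tree and every non-root vertex carries the wildcard label $l_-$. This is a valid partial solution: every vertex whose radius-$1$ ball contains the root sees the $\bot$ at $v$ and is therefore accepted by $\phi^*$, and each $l_-$ vertex imposes no constraint on its children. Any mend $\lambda'$ must set $\lambda'(v)=l_0$ since $v$ is the root. I would then show by induction on $d$ that for every $l'\in\labels'$ the number of vertices at distance exactly $d$ from $v$ that are labeled $l'$ under $\lambda'$ is at least $M^d[l_0,l']$: the base case $d=0$ is immediate from $\lambda'(v)=l_0$, and for the inductive step every vertex at distance $d$ labeled $l''$ forces at least $\mu(l'',l')=M[l'',l']$ children labeled $l'$ at distance $d+1$, with these contributions living in disjoint subtrees. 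Since each of these positions was originally labeled $l_-$, every new non-wildcard label is a change; summing over $l'\in\labels'$ and then over $d=0,\dots,d_{\max}$ yields the bound $\sum_{d=0}^{d_{\max}}\lVert L_{l_0}M^d\rVert$.

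For the upper bound I would, given an arbitrary instance $(G,\lambda,v)$, construct an explicit mend by choosing some $l\in\labels'$ for $v$ and extending $\lambda$ top-down: at each newly labeled vertex $u$ with label $l'$, I assign $\mu(l',l'')$ of its children the label $l''$ for every $l''\in\labels'$, reusing any child that already carries the required label whenever possible and counting a change only when a relabeling is forced. Even in the worst case where every visited vertex was originally $l_-$, the number of changes at distance exactly $d$ from $v$ is bounded by $\lVert L_l M^d\rVert$ by the same matrix-power counting; summing from $d=0$ to $d_{\max}$ and maximising over the possible choices of $l$ gives $\max_{l\in\labels'}\sum_{d=0}^{d_{\max}}\lVert L_l M^d\rVert$. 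The main obstacle here is that filling $v$ removes the $\bot$ from the radius-$1$ neighborhood of $v$'s parent, so the parent's constraint must now be witnessed by its children alone; if this forces corrections to $v$'s siblings, I plan to either choose $l$ so that $v$ slots into the parent's existing profile or to absorb the sideways propagation into the same row-sum bound, a reduction that will be consistent with the characterisation announced in Corollary~\ref{cor:matrix_worst_case} that the empty-at-root labeling is the true worst case for most propagation problems.
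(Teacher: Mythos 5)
Your proposal is correct and follows essentially the same route as the paper's own proof: the lower bound uses the identical all-\(l_-\), unlabeled-root instance together with the inductive matrix-power count of forced labels per level, and the upper bound uses the same greedy top-down mend that colors only \(\mu(l',l'')\) children of each relabeled vertex and ignores the rest. The parent/sibling complication you flag for a non-root hole is genuine but is silently ignored by the paper's proof as well; it costs only a constant factor, since the verifier has radius \(1\), so only the constraint at \(v\)'s parent can newly activate, and satisfying it adds at most one further row-sum's worth of downward propagation from the siblings.
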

\begin{proof}
    We start with the lower bound. Recall that in the input graph all vertices have degree exactly \(\Delta\). Consider an initial partial labeling \(\lambda\) in which the root is initially unlabeled, and all other vertices are labeled \(l_-\). A mend \(\lambda'\) of \(\lambda\) at the root will have to be a complete solution, and thus require the root to be labeled \(l_0\). By the previous observation, at distance \(d\), there must be at least \(M^p[l_0,l']\) vertices in \(\lambda'\) labeled \(l'\) that must have been modified during the mending. This way, \(\sum_{d=0}^{d_{\max}} \lVert L_{l_0} M^d \rVert\) is a lower bound for how many labels were modified at distance at most \(d_{\max}\).

    We can now show the upper bound. An important characteristic of the family of propagation problems is that the output of the verifier depends only on a portion of the labels of the children. Once sufficiently many children are labeled correctly, the remaining ones have no impact. This means that no initial configuration can force more than \(M^d[l,l']\) labels \(l'\) to be added at distance \(d\) from a vertex \(v\) labeled \(l\): in the worst case, it suffices to arbitrarily choose \(M[l,l']\) children at each level for each and color them accordingly while ignoring all the other children. Thus the worst-case instance has mending cost at distance \(d_{\max}\) no more than \(\max_{l\in\labels'} \sum_{d=0}^{d_{\max}} \lVert L_l M^d \rVert\)
\end{proof}

\begin{corollary}[Worst-case instance of a propagation problem]\label{cor:matrix_worst_case}
    If \(l_0\) is such that \(\lVert L_{l_0} M^d \rVert = \Omega(\max_{l\in\labels'} \lVert L_l M^d \rVert)\) then the initial instance where the root is unlabeled and all other vertices are labeled \(l_-\) is the worst-case instance.
\end{corollary}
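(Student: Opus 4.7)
The plan is to obtain the corollary as an immediate consequence of the two-sided bound in Theorem~\ref{thm:matrix_bounds}. That theorem tells us that the specific partial solution in which the root is unlabeled and all other vertices carry the wildcard \(l_-\) already forces at least \(\sum_{d=0}^{d_{\max}} \lVert L_{l_0} M^d \rVert\) relabelings within distance \(d_{\max}\) of the hole, while no partial solution whatsoever can force more than \(\max_{l\in\labels'} \sum_{d=0}^{d_{\max}} \lVert L_l M^d \rVert\). So it suffices to show that under the hypothesis on \(l_0\) these two quantities are equal up to a constant factor.

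First I would unfold the asymptotic hypothesis: there exists a constant \(c>0\), independent of \(d\), such that \(\lVert L_{l_0} M^d \rVert \ge c\,\max_{l\in\labels'} \lVert L_l M^d \rVert\) for every \(d\). Summing this pointwise inequality over \(d\) yields
\[
\sum_{d=0}^{d_{\max}} \lVert L_{l_0} M^d \rVert \;\ge\; c \sum_{d=0}^{d_{\max}} \max_{l\in\labels'} \lVert L_l M^d \rVert.
\]
Then I would invoke the elementary inequality \(\sum_d \max_l a(l,d) \ge \max_l \sum_d a(l,d)\), which holds for any non-negative array because fixing any particular \(l^*\) gives \(\sum_d a(l^*,d) \le \sum_d \max_l a(l,d)\). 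Applied to \(a(l,d) = \lVert L_l M^d \rVert\), this gives
\[
\sum_{d=0}^{d_{\max}} \lVert L_{l_0} M^d \rVert \;\ge\; c\, \max_{l\in\labels'} \sum_{d=0}^{d_{\max}} \lVert L_l M^d \rVert.
\]
Combined with Theorem~\ref{thm:matrix_bounds}, this sandwiches the mending cost of the all-\(l_-\) instance between the global lower bound and the global upper bound up to the multiplicative constant \(c\), so that instance is worst case in the asymptotic sense claimed.

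There is really no deep obstacle: the only subtlety is keeping the \(\Omega\) constant uniform in \(d\) (so that it survives summation) and applying the sum/max swap in the correct direction. Once both are noted, the corollary follows by a one-line chain of inequalities from Theorem~\ref{thm:matrix_bounds}.
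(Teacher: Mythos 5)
Your proposal is correct and matches the paper's intent exactly: the corollary is stated as an immediate consequence of Theorem~\ref{thm:matrix_bounds}, obtained by comparing the lower bound realized by the all-\(l_-\) instance with the universal upper bound, just as you do. Your explicit handling of the uniform \(\Omega\)-constant and the inequality \(\sum_d \max_l a(l,d) \ge \max_l \sum_d a(l,d)\) simply spells out the details the paper leaves implicit.
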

The condition for Corollary~\ref{cor:matrix_worst_case} is satisfied at least for problems where all labels are reachable from \(l_0\) in the sense that for every \(l'\in\labels'\) there exists some \(d_{l'}\) for which \(M^{d_{l'}}[l_0,l'] \ne 0\).

\subsubsection{Landscape of the growth rate of matrix exponentiation}\label{sec:growth_rate_matrix}

In this section, we turn to a study of possible growth rates of the quantity \(\lVert L_l M^d \rVert \) introduced in Section~\ref{sec:matrix_representation}. This quantity is bounded by \(|\labels'|\times \max_{l'\in\labels'} M^d[l,l']\). In order to determine the mending volume of a propagation problem, it is sufficient to look at the growth rate as a function of \(d\) of \(\max_{l,l'\in\labels'} M^d[l,l']\)---the greatest coefficient of \(M^d\). We will denote it as \(\max M^d\).

In the following analysis, we make use of the interpretation of \(M\) as the adjacency matrix of a graph \(G_M\). \(G_M\) is a directed graph with one vertex for each element of \(\labels'\). For every pair \((v_l,v_{l'})\) there are exactly \(M[l,l']\) directed edges from \(v_l\) to \(v_{l'}\).
Further, there are \(M^d[l,l']\) walks of length exactly \(d\) from \(v_l\) to \(v_{l'}\) in \(G_M\). Let \(c(l)\) be the number of cycles in \(G\) that contain \(v_l\). We say that a vertex \(v_l\) is of type $0$ (\textit{resp.} $1$ or $2$) if \(c(l) = 0\) (\textit{resp.} \(c(l) = 1\) or \(c(l) \geq 2\)).
We will show that the type of the vertices fully determines the growth rate of \(|M^p|\): if some vertex is part of several cycles, then there are exponentially many paths of length \(d\) from that vertex to itself. Otherwise, if all vertices are part of at most one cycle, then there are only polynomially many paths of length \(d\) from one vertex to another.

\begin{lemma}
    Consider a vertex \(v_l\) of type 2. It holds that \(M^d[l,l] = \Omega((1+\beta)^d)\) for some \(\beta > 0\).
\end{lemma}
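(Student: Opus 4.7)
Since $v_l$ has type $2$, we can pick two distinct simple cycles $C_1$ and $C_2$ of $G_M$ that both pass through $v_l$; let $p$ and $q$ denote their lengths. Recall that $M^d[l,l]$ counts directed walks of length $d$ from $v_l$ to itself, so it suffices to exhibit sufficiently many such walks.

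The approach is to build closed walks by concatenating $C_1$'s and $C_2$'s in many different orders. For any word $w \in \{C_1, C_2\}^{a+b}$ containing $a$ occurrences of $C_1$ and $b$ occurrences of $C_2$, traversing the cycles in the order specified by $w$ produces a closed walk at $v_l$ of length exactly $ap + bq$. Choosing $a = b = k$ yields $\binom{2k}{k} \geq 2^k$ such words, all of length $k(p+q)$. Assuming these walks are pairwise distinct, this gives $M^{k(p+q)}[l,l] \geq 2^k$, so that $M^d[l,l] \geq (1+\beta)^d$ for every $d$ that is a multiple of $p+q$, with $\beta \coloneqq 2^{1/(p+q)} - 1 > 0$; this is the required $\Omega((1+\beta)^d)$ bound.

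The main obstacle is to verify that distinct words yield distinct walks. The plan is to decode $w$ from its walk by peeling off cycles one at a time. Since $C_1$ and $C_2$ are simple, they do not revisit $v_l$ internally, so the walk returns to $v_l$ for the first time after exactly $p$ steps if $w$ begins with $C_1$ and after exactly $q$ steps if $w$ begins with $C_2$. When $p \neq q$, this return time alone identifies the first letter of $w$; when $p = q$, the edge sequences of $C_1$ and $C_2$ must differ (else $C_1 = C_2$ as cycles), so the first $p$ edges of the walk reveal which cycle was just traversed. Iterating this decoding from $v_l$ after one cycle's worth of steps recovers $w$ in its entirety, confirming injectivity of the construction and completing the lower bound. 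The one technical nuance is that the bound is established only along the arithmetic progression of multiples of $p+q$, but this is the usual reading of $\Omega$ for such growth statements and is enough for the uses of the lemma downstream.
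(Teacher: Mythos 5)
Your proof is correct and follows essentially the same route as the paper: both arguments produce exponentially many distinct closed walks at \(v_l\) by concatenating traversals of the distinct cycles through it, and both obtain the bound only along an arithmetic progression of lengths. The only differences are bookkeeping --- the paper normalizes block lengths via \(L = \lcm(L_1, L_2, \dotsc)\) to get \(c(l)^k\) walks of length \(kL\), whereas you use balanced words over two cycles to get \(\binom{2k}{k}\) walks of length \(k(p+q)\) --- and you spell out the injectivity (unique decodability) of the concatenation, which the paper leaves implicit.
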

\begin{proof}
    Let \(C_1,C_2,\dotsc\) be the \(c(l)\) distinct cycles that contain \(v_l\).
    Let \(L_1,L_2,\dotsc\) denote their lengths respectively, and let \(L \coloneqq \lcm(L_1,L_2,\dotsc)\).
    There are at least \(c(l)\) walks of length \(L\) from \(v_l\) to itself, each following only one of the cycles \(C_j\) \ \(L/L_j\) number of times.
    Hence, for all \(k\), there are at least \(c(l)^k\) walks of length \(d := kL\) from \(v_l\) to itself and therefore \(M^d[l,l] \geq c(l)^{d/L}\) walks for infinitely many values of \(d\). Thus \(M^d[l,l] = \Omega((c(l)^{1/L})^d)\).
\end{proof}
\begin{corollary}
    Let vertex \(v_l\) be of type 2 and reachable from \(v_{l_0}\). Then \(M^d[l_0,l] = \Omega((1+\beta)^d)\)
    for some \(\beta > 0\).
\end{corollary}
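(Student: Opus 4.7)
The plan is to reduce the corollary to the preceding lemma by bridging from $v_{l_0}$ to $v_l$ using the reachability hypothesis, and then relying on the exponentially many $v_l \to v_l$ walks for the remaining length. Since $v_l$ is reachable from $v_{l_0}$ in $G_M$, there exists a fixed integer $d_0 \ge 0$ (depending only on $M$) such that $M^{d_0}[l_0, l] \ge 1$.

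The core step is to decompose walks of length $d \ge d_0$ from $v_{l_0}$ to $v_l$ as a walk of length $d_0$ from $v_{l_0}$ to $v_l$ followed by a closed walk of length $d - d_0$ at $v_l$. By the matrix product formula,
\[
M^d[l_0, l] \;=\; \sum_{l'' \in \labels'} M^{d_0}[l_0, l''] \cdot M^{d-d_0}[l'', l] \;\ge\; M^{d_0}[l_0, l] \cdot M^{d-d_0}[l, l] \;\ge\; M^{d-d_0}[l, l].
\]
Applying the preceding lemma to the type-2 vertex $v_l$ gives $M^{d-d_0}[l, l] = \Omega((1+\beta)^{d-d_0})$, and because $d_0$ is a constant, the multiplicative factor $(1+\beta)^{-d_0}$ is absorbed into the $\Omega$-notation, yielding $M^d[l_0, l] = \Omega((1+\beta)^d)$ as required.

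I do not anticipate any real obstacle beyond bookkeeping: this is a short concatenation-of-walks argument combined with a constant shift. The only subtlety worth flagging is that the lemma's exponential lower bound is actually witnessed along the subsequence of $d$'s of the form $kL$ with $L = \lcm(L_1, L_2, \ldots)$, so the corollary's $\Omega$ should be read along the shifted subsequence $d \in d_0 + L \cdot \mathbb{N}$. This is still sufficient for the use made of the corollary in Theorem~\ref{thm:matrix_bounds}, where one only needs infinitely many values of $d_{\max}$ at which the matrix entry $M^{d_{\max}}[l_0, l]$, and hence the mending volume, grows exponentially.
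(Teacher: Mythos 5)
Your proposal is correct and matches the argument the paper leaves implicit (the corollary is stated without proof as an immediate consequence of the preceding lemma): concatenating a fixed-length walk witnessing reachability with the exponentially many closed walks at \(v_l\) via \(M^d[l_0,l] \ge M^{d_0}[l_0,l]\cdot M^{d-d_0}[l,l]\) is exactly the intended reduction. Your remark that the bound is witnessed only along the shifted subsequence \(d \in d_0 + L\cdot\mathbb{N}\) is a fair refinement consistent with the lemma's own ``for infinitely many values of \(d\)'' caveat.
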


\begin{lemma}
    If there is no vertex of type 2 reachable from \(v_{l_0}\) then \(|M^d|_{l_0} = O(d^k)\) for some constant \(k\).
\end{lemma}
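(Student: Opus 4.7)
The plan is to establish a structural characterization of the reachable part of $G_M$ under the hypothesis, and then to count walks of length $d$ combinatorially using this structure.

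\textbf{Structural step.} Let $H$ be the subgraph of $G_M$ induced by the vertices reachable from $v_{l_0}$. Since every cycle through a reachable vertex lies entirely in $H$, the hypothesis implies that every vertex of $H$ lies on at most one cycle. I would then show that every strongly connected component $S$ of $H$ must fall into one of three forms: (a) a single vertex with no self-loop, (b) a single vertex with exactly one self-loop, or (c) a simple directed cycle, meaning $|S|\ge 2$ vertices arranged in a ring with exactly one edge between each pair of consecutive vertices and no chords or self-loops. The argument is a short case analysis: two self-loops on a single vertex produce two cycles of length $1$ through it; any multi-edge along a ring yields two simple cycles through every vertex of that ring; any chord or extra path inside an SCC of size $\ge 2$ combines with a traversal of the SCC to produce a second cycle through the chord's endpoints. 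In each case we would contradict the type-$\le 1$ property of some reachable vertex.

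\textbf{Counting step.} I would then bound $|M^d|_{l_0}$ by counting walks of length $d$ in $H$ starting at $v_{l_0}$. Every such walk decomposes uniquely into maximal segments that each stay inside a single SCC, separated by transition edges that go strictly downward in the SCC-DAG. Let $T_1\to T_2\to\cdots\to T_m$ be the sequence of distinct SCCs visited, and let $t_i$ be the length of the segment inside $T_i$. Since the SCC-DAG is acyclic, $m\le N$, where $N$ is the (constant) number of reachable SCCs, and $t_1+\cdots+t_m+(m-1)=d$. For a fixed chain of SCCs, the walk is determined by (i) the composition $(t_1,\ldots,t_m)$, (ii) a choice of transition edge between each consecutive $T_i,T_{i+1}$, and (iii) the within-SCC walk of length $t_i$ in each $T_i$ starting at the entry point. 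By the structural step, (iii) contributes exactly one walk per SCC: in a singleton SCC the walk is forced, and in a simple directed cycle of length $L$ the walk of length $t_i$ starting at a given vertex is uniquely the one that advances $t_i$ steps around the cycle. The number of transition edges between two fixed SCCs is bounded by a constant depending only on $M$, so (ii) contributes $O(1)$ choices. The number of compositions in (i) is $\binom{d}{m-1}=O(d^{m-1})$. Summing over the $O(1)$ many chains and over $m\le N$ yields $O(d^{N-1})$ walks of length $d$ from $v_{l_0}$, hence $|M^d|_{l_0}=O(d^k)$ with $k=N-1$.

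\textbf{Main obstacle.} The combinatorial counting is routine once the structural characterization is in place; the delicate step is the structural argument, which must be carried out in the multigraph setting where parallel edges count as distinct edges and produce distinct cycles. One has to check that every possible deviation from the three allowed SCC shapes, whether a parallel edge, a chord, or an extra branch, really does produce a second simple cycle through a reachable vertex and thus a type-$2$ vertex reachable from $v_{l_0}$, contradicting the hypothesis.
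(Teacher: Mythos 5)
Your proposal is correct and follows essentially the same route as the paper: the paper contracts each cycle (its ``clusters'' are exactly your SCCs) into a single vertex of an acyclic graph $G'_M$, and then counts walks of length $d$ by the choice of a walk in $G'_M$ together with the composition $(d_1,\dotsc,d_{|W|})$ of within-cluster segment lengths, giving the same $O(d^K)$ bound with $K$ the number of clusters. The only difference is that you spell out the structural characterization of the SCCs (singleton, single self-loop, or simple directed cycle), which the paper compresses into the parenthetical remark that, absent type-2 vertices, there is exactly one within-cluster walk of each given length.
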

\begin{proof}
    For each vertex \(v_l\), we denote \(C(l)\) to be the cluster of \(v_l\), defined as follows: \(C(l) \coloneqq \set{l}\) if \(c(l) = 0\); otherwise, \(C(l) \coloneqq \set{l' \mid v_l \to^* v_{l'} \to^* v_l}\) describes the vertices in the same cycle as \(v_l\). Since \(c(l) \leq 1\), the clusters form a partition of \(\labels'\). We use \(K\) to denote the number of clusters.

    Construct \(G'_M\) whose vertices are the clusters of \(G_M\), by contracting each cluster into a single vertex while keeping duplicate edges between different clusters, and removing edges inside a cluster. The resulting graph \(G'_M\) is acyclic.
    Any walk \(W\) of length exactly \(d\) in \(G_M\) from \(v_l\) to \(v_{l'}\) is uniquely defined by
    \begin{itemize}
        \item a walk \(W'\) in \(G'_M\) from \(C(l)\) to \(C(l')\), let \(C(l) = C_1 \to C_2 \to \cdots \to C_{|W|} = C(l')\) be this walk;
        \item the length \(d_i\) of the walk within \(C_i\), for each \(1\leq i\leq |W|\) (because no vertex is of type 2, there is only one such walk for a given length).
    \end{itemize}

    Note that \(d_1 + \cdots + d_{|W|} + (|W| - 1)\leq d\) and \(|W| \leq K \leq |\labels'|\). There are finitely many walks \(W'\) in \(G'_M\) and for each of them the number of possible tuples \((d_1,\dotsc,d_{|W|})\) is bounded by \(d^K\). Thus the number of walks \(W\) of length \(d\) is polynomially bounded by \(O(d^K)\).
\end{proof}

We observe further that if there is a walk in \(G'_M\) that goes through two or more cycles, then there are at least \(\Omega(d)\) paths of length \(d\). Whereas if \(G'_M\) contains only isolated cycles, then there are at most \(O(1)\) paths of length \(d\). Thus Theorem~\ref{thm:landscape_max_power} holds.
\begin{theorem}[Landscape of \(\max M^d\)]\label{thm:landscape_max_power}
    The growth rate of \(p \mapsto \max |M^d|\) is either eventually zero, or \(\Theta(1)\), or \(O(d^{\,p})\) for some value \(p \geq 1\), or \(\Omega((1+\beta)^d)\) for some \(\beta > 0\).
\end{theorem}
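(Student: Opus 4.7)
The plan is to combine the two preceding lemmas with a finer analysis of the contracted DAG $G'_M$ defined in the proof of the previous lemma. Recall that $\max M^d = \max_{l,l'} M^d[l,l']$ counts the maximum number of length-$d$ walks between any two vertices of $G_M$. I would first dispatch the exponential case: if some vertex $v_l$ is of type $2$ in $G_M$, then by the earlier lemma $M^d[l,l] = \Omega((1+\beta)^d)$, so $\max M^d = \Omega((1+\beta)^d)$, placing us in the last case of the theorem.

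In the opposite case, every vertex has $c(l) \leq 1$, every cluster is either a singleton of a type-0 vertex or the vertex set of a single directed cycle, and the contracted graph $G'_M$ is a DAG. I would refine the counting of the previous lemma as follows: for each walk $W' : C_1 \to \dots \to C_m$ in $G'_M$ (there are finitely many, since $G'_M$ is a DAG), any walk of length $d$ in $G_M$ whose cluster-projection equals $W'$ is determined uniquely by a tuple $(d_1, \dots, d_m)$ of nonnegative integers with $d_1 + \dots + d_m + (m-1) = d$, where $d_i$ is forced to a single value when $C_i$ is type-0 and must lie in a fixed residue class modulo the cycle length $L_i$ when $C_i$ is type-1. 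Let $k(W')$ denote the number of type-1 clusters visited by $W'$, and set $k^{*} \coloneqq \max_{W'} k(W')$.

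The theorem then follows from a case analysis on $k^{*}$. When $k^{*} = 0$ every walk $W'$ imposes a fixed total length, so $M^d[l,l']$ is nonzero for only finitely many $d$ and $\max M^d$ is eventually zero. When $k^{*} = 1$ each $W'$ contributes at most one valid tuple for each $d$, because the residue of $d$ modulo $L_i$ determines $d_i$ in the unique visited cycle, and since there are finitely many $W'$ we obtain $\max M^d = O(1)$; because walks that spin indefinitely inside a cycle exist, $\max M^d$ is positive for infinitely many $d$, yielding $\Theta(1)$ in the sense of the paper. Finally, when $k^{*} \geq 2$, a lattice-point count on the simplex of admissible tuples gives $\Theta(d^{k^{*}-1})$ valid tuples for a maximizing $W'$, while every other $W'$ contributes at most $O(d^{k^{*}-1})$; summing over the finitely many $W'$ yields $\max M^d = \Theta(d^{k^{*}-1}) = O(d^{p})$ with $p = k^{*} - 1 \geq 1$.

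The main obstacle is the lattice-point count in the last case. After substituting $d_{i_j} = r_{i_j} + L_{i_j} q_{i_j}$ for each type-1 cluster $C_{i_j}$ visited by $W'$, where $r_{i_j}$ is the fixed residue imposed by the entry and exit points of the cluster and $q_{i_j} \geq 0$ is an integer, one needs to count nonnegative tuples $(q_{i_1}, \dots, q_{i_k})$ whose weighted sum $\sum_j L_{i_j} q_{i_j}$ equals a prescribed linear function of $d$; standard generating-function arguments give a count of $\Theta(d^{k-1})$. The delicate point is verifying that the residues $r_{i_j}$ are simultaneously satisfiable for infinitely many $d$, so that the lower bound $\Omega(d^{k^{*}-1})$ is actually realized rather than being always zero; this follows by exhibiting at least one walk $W$ realizing $W'$ and then parameterizing the remaining solutions around it.
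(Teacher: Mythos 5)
Your proposal is correct and follows essentially the same route as the paper: the paper also derives the theorem by combining the two preceding lemmas with an analysis of walks in the contracted DAG \(G'_M\), distinguishing whether a walk can traverse zero, one, or at least two cycles. The paper dispatches this in a single sentence, whereas you supply the supporting lattice-point count and correctly flag the residue-satisfiability issue needed for the matching lower bound (which the theorem statement, asking only for \(O(d^{\,p})\) in that case, does not strictly require).
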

Combining these results with the known bounds from Theorem~\ref{thm:matrix_bounds} stating how to relate the mending volume to the growth of \(\max M^d\), we obtain Corollary~\ref{cor:landscape_infinite_trees}.
\begin{corollary}[Landscape of the mending volume on infinite rooted trees]\label{cor:landscape_infinite_trees}
    The mending volume up to distance \(d \coloneqq \log_\Delta n\) of a propagation problem is either
    \begin{itemize}
        \item \(O(1)\) if \(\max M^d\) is eventually zero;
        \item or \(\Theta(\log n)\) if \(\max M^d = \Theta(1)\);
        \item or \(O(\log^k n)\) for some \(k > 1\)  if \(\max M^d = O(d^{\,p})\);
        \item or \(\Omega(n^\alpha)\) for some \(0 < \alpha < 1\) if \(\max M^d = \Omega((1+\beta)^d)\).
    \end{itemize}
\end{corollary}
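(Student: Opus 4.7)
The plan is to feed the landscape of $\max M^d$ from Theorem~\ref{thm:landscape_max_power} into the two-sided bound of Theorem~\ref{thm:matrix_bounds}, with $d_{\max} \coloneqq \log_\Delta n$. First I would observe that for any label $l$, the row norm $\lVert L_l M^d \rVert = \sum_{l'} M^d[l,l']$ is sandwiched as $\max M^d \le \max_l \lVert L_l M^d \rVert \le |\labels'| \cdot \max M^d$, so both the lower bound $\sum_{d=0}^{d_{\max}} \lVert L_{l_0} M^d \rVert$ and the upper bound $\max_l \sum_{d=0}^{d_{\max}} \lVert L_l M^d \rVert$ from Theorem~\ref{thm:matrix_bounds} agree asymptotically (up to a factor $|\labels'|$ which is a constant, since the LCL is fixed) with $\sum_{d=0}^{d_{\max}} \max M^d$, provided we are in the regime of Corollary~\ref{cor:matrix_worst_case}, i.e.\ the growth-dominating label is reachable from $l_0$.

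With this reduction in hand, I would split into the four cases of Theorem~\ref{thm:landscape_max_power} and just integrate the geometric series. If $\max M^d$ is eventually zero, the partial sum is eventually constant, which gives the $O(1)$ regime. If $\max M^d = \Theta(1)$, then $\sum_{d=0}^{\log_\Delta n} \Theta(1) = \Theta(\log n)$. If $\max M^d = O(d^p)$, then $\sum_{d=0}^{\log_\Delta n} O(d^p) = O((\log n)^{p+1})$, so we land in the $O(\log^k n)$ class with $k = p+1 > 1$. Finally, if $\max M^d = \Omega((1+\beta)^d)$, then
\[
\sum_{d=0}^{\log_\Delta n} (1+\beta)^d \;=\; \Omega\bigl((1+\beta)^{\log_\Delta n}\bigr) \;=\; \Omega\bigl(n^{\log_\Delta(1+\beta)}\bigr),
\]
yielding $\Omega(n^\alpha)$ with $\alpha = \log_\Delta(1+\beta) \in (0,1)$ (one checks $1+\beta < \Delta$ because there are at most $\Delta$ outgoing walks of length $1$ from any vertex, and more carefully that $\beta$ stays bounded away from $\Delta-1$ unless the tree is already fully $l_-$-free).

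The main obstacle will be the bookkeeping around \emph{which} label plays the role of $l_0$ in the lower bound. Theorem~\ref{thm:matrix_bounds} only guarantees the lower bound through $L_{l_0}$, so to conclude $\Omega(n^\alpha)$ or $\Theta(\log n)$ we need the worst-case type-$2$ (resp.\ non-trivial) vertex to be reachable from $l_0$ in $G_M$; this is precisely the hypothesis of Corollary~\ref{cor:matrix_worst_case}, and I would invoke it explicitly. Everything else is a routine summation. Note that the statement of the corollary is deliberately asymmetric ($O$ in the polylog case, $\Omega$ in the polynomial case), so we do not need a tight two-sided bound in every regime, which avoids the delicate matching between $\lVert L_{l_0} M^d \rVert$ and $\max_l \lVert L_l M^d \rVert$ outside the cases where Corollary~\ref{cor:matrix_worst_case} applies.
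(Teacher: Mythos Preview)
Your proposal is correct and follows exactly the approach the paper takes: the paper's proof of this corollary is a single sentence stating that one combines Theorem~\ref{thm:landscape_max_power} with the bounds of Theorem~\ref{thm:matrix_bounds}, and your case-by-case summation over $d \le \log_\Delta n$ is precisely how that combination is carried out. In fact you supply more detail than the paper does, including the explicit appeal to Corollary~\ref{cor:matrix_worst_case} to justify the lower bounds in the $\Theta(\log n)$ and $\Omega(n^\alpha)$ cases, which the paper leaves implicit.
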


This concludes the survey of the landscape of propagation problems on infinite rooted trees. We found that there are complexity classes \(O(1)\), \(\poly(\log n)\) and \(\Omega(n^\alpha)\), with a gap between \(\omega(\log n)\) and \(o(\log^2 n)\). In Sections~\ref{sec:volume_polynomial} and~\ref{sec:volume_polylog} we will look more closely at the classes of growths \(\Omega(n^\alpha)\) and \(\Theta(\log^k n)\) to show that infinitely many values of \(\alpha\) and \(k\) can appear.

\subsection{Finite and non-regular trees}\label{sec:transfer_finite_trees}

Some of the presented results from Section~\ref{sec:growth_rate_matrix} are fortunately applicable to trees even if they are no longer infinite rooted and regular. Indeed there is a straightforward translation that transforms a propagation problem on infinite \(\Delta\)-regular rooted trees into one that has the same growth properties, but can be defined on finite rooted trees where some vertices are of degree lower than \(\Delta\). This construction is shown in Problem~\ref{prob:generalization_problem_trees}.
\begin{problemdef}[tb]
    \caption{Generalization of \(\problem\) to finite and non-\(\Delta\)-regular trees\problemcaptionkludge}
    \label{prob:generalization_problem_trees}
    \vspace{-\topsep}\begin{description}[noitemsep]
        \item[Input:] Any tree
        \item[Labels:] Same as \(\problem\)
        \item[Task:] Any vertex of degree exactly \(\Delta\) must satisfy the labeling constraints from \(\problem\)
    \end{description}\vspace{-1.1\topsep}
\end{problemdef}

We now argue that the fact that these trees are finite does not affect the conclusions made earlier about the possible complexities. The worst-case instance can namely still be constructed as a balanced finite \(\Delta\)-regular tree. We prove that (1) randomness does not decrease the performance and (2) the mending process is just as efficient in the case that the tree is unbalanced as it is in a balanced tree.

\begin{theorem}[Generalization to finite unbalanced trees does not change the mending volume complexity]
    \label{thm:unbalance}
    If \(\problem\) is a propagation problem, then its generalization \(\problem'\) to unbalanced trees has the same mending volume complexity.
\end{theorem}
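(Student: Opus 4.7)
The plan is to match the matrix-based bounds of Theorem~\ref{thm:matrix_bounds} in both directions after extending to $\problem'$. The key observation is that the generalization only \emph{removes} constraints: a vertex whose degree differs from $\Delta$ is automatically \textsf{happy} under $\problem'$, so every valid mend in the infinite $\Delta$-regular case can be truncated into a valid mend in the non-regular finite case, and conversely every constraint-inducing instance of the original problem can be embedded as a finite balanced subtree on which $\problem'$ behaves exactly like $\problem$.

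For the lower bound, I would use as a worst-case instance the finite balanced $\Delta$-regular rooted tree $T_n$ of depth $d = \lfloor \log_\Delta n \rfloor$: every internal vertex has exactly $\Delta$ children and is therefore bound by the same propagation rules as in $\problem$, while only the leaves at depth $d$ are unconstrained. Starting from the partial solution in which the root is unlabeled and every other vertex carries $l_-$, the lower-bound part of the proof of Theorem~\ref{thm:matrix_bounds} transfers unchanged: the root must be labeled $l_0$, and each label $l$ placed at depth $p$ forces $\mu(l,l')$ labels $l'$ at depth $p+1$ for every $l'\in\labels'$. Summing $\lVert L_{l_0}M^p\rVert$ over $0 \le p \le d = \Theta(\log_\Delta n)$ gives the same asymptotic lower bound as in the infinite case.

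For the upper bound, I would adapt the constructive strategy from Theorem~\ref{thm:matrix_bounds}. Root the finite input tree $T$ at the hole $v$, pick a label $\ell \in \labels'$ for $v$ that is compatible with the partial neighborhood of $v$ in $\lambda$, and propagate downward: at every node $u$ just labeled $l$, select, for each $l' \in \labels'$, any $\mu(l,l')$ children of $u$ currently labeled $l_-$, relabel them $l'$, and leave the remaining children untouched. If at some point $u$ has fewer than $\Delta$ children in $T$ (or more, in which case the extra ones can stay $l_-$), then $u$ is automatically \textsf{happy} under $\problem'$ and the propagation stops at $u$; this can only \emph{reduce} the volume compared with the fully regular case. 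Thus the number of labels changed at distance $\le d_{\max}$ from $v$ is bounded by $\max_{l\in\labels'}\sum_{p=0}^{d_{\max}}\lVert L_l M^p\rVert$, matching the infinite-tree upper bound. The clause that ``randomness does not decrease the performance'' enters here: since this bound depends only on the matrix powers and not on which specific children are selected, a uniformly random choice attains the same expectation, so by averaging/derandomization some deterministic selection achieves the bound as well.

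The main obstacle is to rule out that unbalance could \emph{increase} the mending volume near a hole. A priori, one might fear adversarial subtree structures that force propagation to escape repeatedly through sparse regions and accumulate extra cost. I would address this by observing that any deviation from a fully $\Delta$-regular surrounding implies a vertex with degree different from $\Delta$ along the propagation path, which immediately makes $\problem'$ vacuously satisfied there and terminates propagation. Hence the worst-case volume is attained exactly when no such truncation ever occurs, which forces the relevant region of $T$ to be a balanced $\Delta$-regular subtree---the very case already matched by the lower-bound construction.
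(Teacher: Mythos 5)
Your lower bound is fine, but the upper-bound half has a genuine gap, and it sits exactly where the real content of the theorem lies. You equate ``unbalanced'' with ``containing a vertex of degree different from \(\Delta\) along the propagation path,'' and conclude that the worst case must be a balanced \(\Delta\)-regular subtree because any deviation truncates propagation. That is not true: a tree can have every internal vertex of degree exactly \(\Delta\) (so no constraint is ever vacuous) while its leaves sit at wildly different depths. In such a tree the propagation only stops at leaves, and a leaf can be at depth \(\omega(\log_\Delta n)\). Your stated upper bound \(\max_{l}\sum_{p=0}^{d_{\max}}\lVert L_l M^p\rVert\) is a bound \emph{per unit of depth}, not per number of vertices; to turn it into a bound in terms of \(n\) you need \(d_{\max}=O(\log_\Delta n)\), which is precisely the balancedness you are trying to do without. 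Concretely, for the \(\Theta(\log^k n)\) problems a degree-regular but depth-unbalanced tree could a priori force the propagation to run to depth \(n^{\Omega(1)}\), giving volume \(n^{\Omega(1)}\) instead of \(\mathrm{polylog}(n)\), unless the mender chooses \emph{which} children to recolor as a function of their subtree sizes. Your claim that the bound ``depends only on the matrix powers and not on which specific children are selected'' is therefore false in the finite unbalanced setting: the choice of children determines how soon each propagation branch reaches a leaf and dies.

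The paper closes this gap with an argument you do not have: an induction on subtree sizes showing that the balanced tree is the worst case among trees with a fixed number of vertices. Writing the children's subtree sizes as \(n/\Delta + d_i\) with \(\sum_i d_i = 0\), it considers the mender that assigns the required child labels by a uniformly random permutation, applies the induction hypothesis to each subtree, and uses the (eventual) curvature of the balanced growth functions \(f^{\text{bal}}_j\) (which are of the form \(\sum_{p}\lVert L_l M^p\rVert\) as a function of subtree size) to show the expected cost on the unbalanced instance is at most the balanced cost \(f^{\text{bal}}(n+1)\); since the optimal mend is at most the expected cost of the random one, the existential volume is unchanged. This averaging-over-permutations step is the correct version of the ``derandomization'' remark you gesture at, but it has to be run against subtree \emph{sizes}, not against degree deviations. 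Without some argument of this shape, your proof does not rule out that unbalance increases the mending volume, which is the only nontrivial direction of the theorem.
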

\begin{proof}
    Assume that we wish to label a tree of size \(n+1\) rooted in \(v\). Each of the \(\Delta\) subtrees that are children of \(v\) have size \(n/\Delta+d_i\) for \(1 \leq i \leq \Delta\) where \(\sum_{i=1}^\Delta d_i = 0\), and we wish to assign a new labeling to each of them. The growth rate \(f^\text{bal}_j\) (\(1 \leq j \leq \delta\)) of the number of labels that would need to be modified for a balanced tree is uniquely determined by its assigned root label \(l_i\). A key observation is that as \(f^\text{bal}_j\) is defined by some \(\sum_{d=0}^{d_{\max}} \lVert L_l M^d \rVert \), it is either eventually zero or eventually convex.
    The total number of modified labels if the tree was balanced would be
    \[f^\text{bal}(n+1) = 1 + \sum_{i=1}^{\delta} f^\text{bal}_i(n/\delta).\]

    Assume inductively that the true number of modified labels in any non-balanced tree of size \(n'\) is less than \(f^\text{bal}_j(n')\), i.e. that a balanced tree is the worst-case input. Let \(c_{i,j}\) denote this number for the subtree \(i\) if it were assigned root label \(j\). The average performance of an algorithm that distributes the required labels randomly among all children with equal probability is then
    \begin{align*}
        c'
        &= \avg_{\sigma\in\mathfrak{S}_\delta} 1 + \sum_{i=1}^\delta c_{i,\sigma(i)} \\
        &\leq \avg_{\sigma\in\mathfrak{S}_\delta} 1 + \sum_{i=1}^\delta f^\text{bal}_{\sigma(i)}(n/\delta + d_i)
        &\text{induction hypothesis} \\
        &\leq 1 + \sum_{i=1}^\delta \avg_{\sigma\in\mathfrak{S}_\delta} f^\text{bal}_i(n/\delta + d_{\sigma(i)})
        &\text{reassign indices} \\
        &\leq 1 + \sum_{i=1}^\delta f^\text{bal}_i(n/\delta)
        &\text{by convexity of all \(f^\text{bal}_j\)} \\
        &\leq f^\text{bal}(n+1).
    \end{align*}

    The induction hypothesis holds for a balanced tree of size 1.
    Since the optimal mend has volume at most the expected volume of a mend picked at random, it follows that the existential mending volume on unbalanced trees is the same as the existential mending volume on balanced trees.
\end{proof}

\subsection{Application: \minvolumepolynomialtitle}\label{sec:volume_polynomial}

In the following two sections, we show examples of problems that exhibit polynomial and polylogarithmic complexities, with a particular focus on showing which values of \(0 < \alpha < 1\) and \(k \geq 1\) can appear for complexities \(\Theta(n^\alpha)\) and \(\Theta(\log^k n)\).

The prior analysis resulting in Corollary~\ref{cor:landscape_infinite_trees} suggests that, in order to construct a problem with volume \(n^{\Theta(1)}\), we should consider a propagation problem whose matrix \(M\) has exponential growth for \(d \mapsto \max M^d\). A good candidate is the problem described by \(M = \left(\begin{tabular}{c}2\end{tabular}\right)\) for \(\Delta = 3\). It describes the following problem:
\begin{problemdef}
    \caption{Polynomial propagation\problemcaptionkludge}
    \label{prob:polynomial_propagation}
    \vspace{-\topsep}\begin{description}[noitemsep]
        \item[Input:]  An infinite rooted \(\Delta\)-regular tree
        \item[Labels:]  \textsf{red} and \textsf{white}
        \item[Task:]  Color the vertices according to the following rules, by order of precedence:
        \begin{itemize}
            \item any labeling is valid for a vertex that does not have exactly 3 children;
            \item any labels are valid for the children of a vertex labeled \textsf{white};
            \item if a vertex is \textsf{red}, it needs at least two of its children to be \textsf{red};
            \item the root has to be \textsf{red}.
        \end{itemize}
    \end{description}\vspace{-0.8\topsep}
\end{problemdef}

Using the terminology of Definition~\ref{def:construct_propagation}, \textsf{red} is the initial label, and \textsf{white} is the wildcard label. From the initial labeling consisting of the root being unlabeled and all other vertices being \textsf{white}, a mend needs to recolor \(2^d\) vertices at layer \(d\) from \textsf{white} to \textsf{red}. For a balanced ternary tree, this will produce a total of \(2^{\log_3 n + 1} - 1\) recolored vertices, i.e. \(\Theta(n^{\ln 2 / \ln 3})\).

We further argue that by slightly adjusting the parameters, we can engineer any rational power of \(n\): the problem described by \(M = \left(\begin{tabular}{c}\(2^p\)\end{tabular}\right)\) for \(\Delta = 2^q\), where \(q > p > 1\), will exhibit complexity \(\Theta(n^{\ln 2^p / \ln 2^q}) = \Theta(n^{p/q})\).

\subsection{Application: \minvolumepolylogtitle}\label{sec:volume_polylog}

We now show how to construct a problem that has mending volume \(\log^k n\) for any chosen \(k \geq 1\). This time, Corollary~\ref{cor:landscape_infinite_trees} suggests to look for a matrix for which \(d \mapsto \max M^d\) has growth rate \(\Theta(p^k)\). This is satisfied by the matrix illustrated in Figure~\ref{fig:polylog_growth_matrix}: its size is \(k\), and it has entries \(1\) along and immediately above the diagonal, and entries \(0\) everywhere else. A solution to the problem described by this matrix has the following form: a path from a leaf to the root is labeled \(l_0\) including both ends. From each vertex labeled \(l_0\) there is a path labeled \(l_1\) from another leaf, and so on until each vertex labeled \(l_{k-2}\) being the endpoint of a path labeled \(l_{k-1}\) from a leaf.
\begin{figure}[tb]
    \centering\includegraphics[width=44mm]{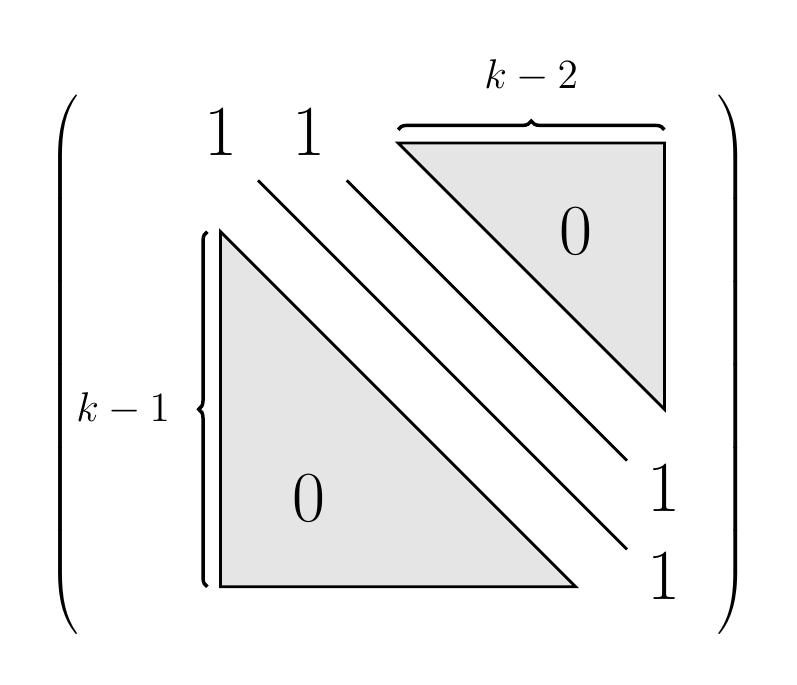}
    \vspace{-2mm}
    \caption{\(M_k\) of size \(k\times k\) exhibits growth \(\Theta(p^k)\) for the root label \(l_1\)}
    \label{fig:polylog_growth_matrix}
\end{figure}

\section{Algorithmic definitions of volume mending}\label{sec:algorithmic_mending_volume}

The mending volume introduced in Definition~\ref{def:existential_volume} and studied throughout Section~\ref{sec:existential_landscape} does not consider the complexity of finding a mend. For distributed systems, where no entity has a complete view of the input graph, a naive algorithm to compute the optimal mend may need to query an exponential number of vertices compared to how many of them will actually be relabeled in the end. For such applications, it may be more appropriate to consider alternative measures of mending where the cost is not just the number of labels that were modified but also the number of vertices that need to be queried before an algorithm with only local knowledge of the graph can compute a mend.

\subsection{Mending with local knowledge}

We will now focus on the process of computing the mend and make sure that each step of the computation can be completed with only local knowledge. A single-step definition like Definition~\ref{def:existential_volume} cannot express such restrictions. Therefore we will introduce a step-by-step definition of a process that computes a mend.

Such a process should take as input a graph \(G\), an initial partial solution \(\lambda\), and a hole \(v\) of \(\lambda\) that needs to be mended. We consider the following model describing the knowledge that the process has access to: initially it only knows \(W = \set{v}\) and the list of its direct neighbors. Whenever a vertex \(v'\) that is a direct neighbor of \(W\) is queried, the process can add \(v'\) to \(W\) and it acquires knowledge of all the neighbors of \(v'\). Thus at each step of the computation, the process has access to the connected set \(W\) of all vertices previously explored, and it can choose to explore any direct neighbor of \(W\) as its next vertex. Such an exploration model is similar to local computation algorithms discussed in~\cite{rubinfeld11} where we can learn the graph by probing it one node at a time.

The process can stop its exploration whenever the set of explored vertices \(W\) contains a mend in the following sense: there exists a mend of \(\lambda\) at \(v\) ---\(\lambda'\)--- in which only vertices from \(W\) are relabeled, i.e. such that \(\set{v'\ :\ \lambda'(v') \ne \lambda(v')} \subseteq W\). We choose to model this exploration process with a Markov chain: each state represents a possible value for \(W\). Having explored \(W\), the process may choose that its next vertex to explore should be \(v'\). In this case, the Markov chain will assign a nonzero probability to the transition \(W \to W\cup\set{v'}\).

\begin{definition}[Mender]\label{def:mender}
    For an input graph \(G = (V,E)\), an initial partial labeling \(\lambda\), and a hole \(v\) of \(\lambda\), let \(\mender_{G,\lambda,v}\) be a Markov chain over \(\parts(V)\) the subsets of \(V\).
    We call \(\mender\) a local \emph{mender} if the following properties hold for every \(G,\lambda,v\):
    \begin{description}
        \item[Progress:]
            \(\mender_{G,\lambda,v}(W,W') > 0\) implies \(W \subseteq W'\).
        \item[Termination:]
            \(\mender_{G,\lambda,v}(W,W) = 1\) iff there exists a mend of \(\lambda\) at \(v\) contained in \(W\).
        \item[Locality:]
            \(\mender_{G,\lambda,v}(W,\cdot)\) depends only on \(\neigh_1^G(W)\), i.e. \(W\) and its direct neighborhood.
    \end{description}
\end{definition}

Progress states that the set of explored vertices can only grow from one step to the next, and termination expresses that the computation will halt as soon as a mend is found. Locality ensures that at all steps of the computation the information accessible to the mender is exactly what it has already explored.

\begin{observation}[\(\mender_{G,\lambda,v}\) has a stationary distribution]
    Observe the following:
    \begin{enumerate}
        \item There must exist a mend of \(\lambda\) at \(v\in V\), so that \(V\) is absorbing. Since all other states of the Markov chain are subsets of \(V\), the conditions of \textsf{Termination} and \textsf{Progress} from Definition~\ref{def:mender} are compatible.
        \item All non-absorbing states eventually lead to an absorbing state in finitely many steps with nonzero probability since there are only finitely many subsets of \(V\) and hence finitely many states.
        \item There are no loops involving more than one state.
    \end{enumerate}
    From the above properties it follows that \(\mender_{G,\lambda,v}\) is an absorbing Markov chain, and therefore it has a stationary distribution that we denote \(\mender^*_{G,\lambda,v}\).
\end{observation}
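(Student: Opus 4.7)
The plan is to verify in turn each of the three bulleted items that appear in the statement and then invoke the standard theory of absorbing Markov chains to conclude the existence of the stationary distribution \(\mender^*_{G,\lambda,v}\). All three items are structural consequences of the \textsf{Progress} and \textsf{Termination} axioms of Definition~\ref{def:mender}, so the proof is more a verification than a calculation.

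For item~1, I would first argue that at least one mend of \(\lambda\) at \(v\) always exists: since \(\lambda\) agrees with the empty labeling on all holes and the empty labeling is a valid partial solution, any completion that labels \(v\) (and possibly many neighbors to satisfy \(\phi\)) gives a valid mend; in particular a mend contained in \(V\) exists. Consequently \(W = V\) satisfies the hypothesis of \textsf{Termination}, forcing \(\mender_{G,\lambda,v}(V,V) = 1\), so \(V\) is absorbing. This also shows that the two clauses of \textsf{Termination} and \textsf{Progress} are mutually consistent, because a transition out of \(V\) is impossible (\textsf{Progress} would require \(V \subseteq W'\), so \(W' = V\)).

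For items~2 and~3, the key observation is that \textsf{Progress} forces every transition \(W \to W'\) with positive probability to satisfy \(W \subseteq W'\). In particular, if \(W' \ne W\) then \(W \subsetneq W'\), so no state can be revisited once it is left; this gives item~3, the absence of non-trivial loops. For item~2, let \(W\) be non-absorbing. Then by \textsf{Termination}, \(W\) does not contain a mend, so \(\mender_{G,\lambda,v}(W,W) < 1\), and there is a positive-probability transition to some strict superset \(W \subsetneq W'\). Iterating, and using that \(\parts(V)\) is finite and that \(W \subsetneq W' \subsetneq \cdots\) must terminate in at most \(|V|\) steps, we reach in finitely many steps a state containing a mend, which by \textsf{Termination} is absorbing. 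The probability of this finite sequence of transitions is a positive product of positive numbers, hence positive.

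The three items together say exactly that \(\mender_{G,\lambda,v}\) is an absorbing Markov chain on a finite state space; the last step is then to quote the classical fact that such a chain has a stationary distribution supported on its absorbing states, which we name \(\mender^*_{G,\lambda,v}\). The only step that requires any genuine argument (rather than unwinding definitions) is the existence of at least one mend of \(\lambda\) at \(v\), which I expect to be the main obstacle: it is not strictly guaranteed by Definition~\ref{def:mend} alone for every LCL problem, but it follows from the fact that the class of partial solutions is closed under the operation of ``unlabeling'' sufficiently many vertices (property~3 of Definition~\ref{def:partial_solution}) together with the existence of a complete solution on \(G\); if one wishes to be fully rigorous, this mendability assumption should be made explicit, and in the general LCL setting it is the only non-trivial hypothesis behind the observation.
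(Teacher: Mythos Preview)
Your proposal is correct and follows the same approach as the paper: the paper's observation \emph{is} its own proof, consisting of exactly the three enumerated items followed by the appeal to absorbing Markov chains, with no further elaboration. Your write-up simply fleshes out each item more carefully than the paper does, and your closing remark that the existence of at least one mend is the only genuinely non-trivial hypothesis (implicitly assumed rather than proved in the paper) is a fair and accurate observation.
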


\noindent We call all subsets of \(V\) that contain a mend \textit{final states}. Due to the existence of a stationary distribution, a final state \(W_F\) is reachable from the initial state \(W_0\) if and only if \(\mender^*_{G,\lambda,v}(W_0, W_F) > 0\). We write \(\final^{\mender}_{G,\lambda,v}(W_0)\) for the set of final states reachable from \(W_0\).

In Definition~\ref{def:complexity_generalized}, we formally define the three notions of mending volume. We therefore use the formalism of Markov chains developed in Definition~\ref{def:mender}.
These mending volume complexities are defined as measures of the sizes of the elements of \(\final^{\mender}_{G,\lambda,v}(W_0)\), where \(W_0 = \set{v}\) the hole to be mended. For all of these measures we consider the worst-case instance by taking a \(\max\) over all possible values of \(\lambda\) and \(v\).

\begin{definition}[Complexity]\label{def:complexity_generalized}
    Consider a mender \(\mender\) and a graph \(G\). We define the best-case volume, expected volume, worst-case volume, and radius as measures of \(\final^{\mender}_{G,\lambda,v}(\set{v})\):
    \begin{align*}
        \bm{\boldminvolume'(\mender, G)}
            &\coloneqq \max_{\lambda,v} \min\set{|W_F|\ :\ W_F\in\final^{\mender}_{G,\lambda,v}(\set{v})}
            &\text{best-case volume} \\
        \bm{\boldavgvolume(\mender, G)}
            &\coloneqq \max_{\lambda,v} \sum_{W_F} \mender^*_{G,\lambda,v}(\set{v},W_F) \cdot |W_F|
            &\text{expected volume} \\
        \bm{\boldmaxvolume(\mender, G)}
            &\coloneqq \max_{\lambda,v} \max\set{|W_F|\ :\ W_F\in\final^{\mender}_{G,\lambda,v}(\set{v})}
            &\text{worst-case volume} \\
        \bm{\boldradius(\mender, G)}
            &\coloneqq \max_{\lambda,v} \min_{W_F \in \final^{\mender}_{G,\lambda,v}(W)}\max_{w\in W_F}d(\set{v}, w)
            &\text{best-case radius}
    \end{align*}
\end{definition}

Intuitively, \(\minvolume'\) (best-case mending volume) is the size of the smallest reachable final state, \(\avgvolume\) (expected mending volume) is the expected size of an explored set given by the probability distribution of the mender, \(\maxvolume\) (deterministic mending volume) is the size of the largest reachable final state, and \(\radius\) (mending radius) is the smallest radius of a final state.

 We have also redefined the notion of existential mending volume \(\minvolume'\) using the formalism of Markov chains. As suggested by the notation, this measure coincides with \(\minvolume\) introduced in Definition~\ref{def:existential_volume}. We prove this fact in Lemma~\ref{lem:minvolume_equivalence} using a straightforward construction of a Markov chain that explores all possible mends so that the ``\(\min\)'' in the definition of \(\minvolume'\) results in the optimal mend.

 The worst-case volume is written \(\maxvolume\), i.e. \textit{Deterministic Mending Volume}, because an equivalent definition of \(\maxvolume\) is to remove all randomness from \(\mender\) and define \(\maxvolume\) as the deterministic size of the final state. This connection is explored in detail in Section~\ref{sec:landscape_maxvolume}. The definition of \(\radius\) also coincides with the mending radius introduced in~\cite{local_mending}, as is shown in Lemma~\ref{lem:radius_equivalence}.

Further, for a function \(T\) on integers and a family of input graphs \(\graph\), we say that \(\problem\) has \(\minvolume' = O(T)\), if there exists \(\mender\) that has performance \(\minvolume'(\mender, G_n) = O(T(n))\) on all \(G_n \in \graph_n\) instances of size \(n\). That is, there is a mender which performs at least as well as \(T\) asymptotically.
Similarly, we say that \(\problem\) has \(\minvolume' = \Omega(T)\) if for all \(\mender\) there exists \(G_n \in \graph_n\) such that \(\minvolume'(\mender, G_n) = \Omega(T(n))\), i.e., no mender can guarantee performance asymptotically better than \(T\) on all instances.
The above notation extends to all combinations of $o$, $O$, $\omega$, $\Omega$, $\Theta$ and $\minvolume'$, $\avgvolume$, $\maxvolume$, $\radius$.

Having concluded the definitions, we will next lay some foundations for a rough landscape of the existing mending complexities, extended with the new measures introduced in Definition~\ref{def:complexity_generalized}. The first step will be to establish the hierarchy between all the measures of complexity introduced previously, and to handle some simple cases such as the setting of paths and cycles.

\subsection{Hierarchy of complexities}\label{sec:hierarchy}

We first establish that \(\radius \leq 2r\times\minvolume\), where \(r\) is the radius of the local verifier that defines the problem \(\problem\). Consider an initial labeling \(\lambda\), and a mend of \(\lambda\) at \(v\) called \(\lambda'\). Denote \(D = \set{v\ :\ \lambda(v) \ne \lambda'(v)}\) the set of relabeled vertices. We argue that \(\neigh_r^G(D)\) is connected. Assume the opposite, and consider a maximal subset of \(D\) denoted \(D'\) that does not contain \(v\) where \(\neigh_r^G(D')\) is connected. Construct a new labeling $\lambda''$ as follows \[\lambda'': x \mapsto \left\{\begin{array}{ll}\lambda(x) & \text{if \(x\in D'\)} \\ \lambda'(x) & \text{otherwise.}\end{array}\right.\] Observe that \(\lambda'\) is a valid partial labeling: for every vertex, its neighborhood in \(\lambda''\) is identical to its neighborhood either in \(\lambda\) or in \(\lambda'\). In addition, \(\lambda''\) coincides with \(\lambda'\) over the neighborhood of \(v\). It is therefore a mend of \(\lambda\) at \(v\). Since \(\lambda''\) coincides with \(\lambda\) over \(D'\), it modifies fewer labels than \(\lambda'\). This shows that \(\lambda'\) is not minimal.

Therefore if \(\lambda'\) is a minimal mend then the radius-\(r\) neighborhood of the vertices it relabels is connected. By definition, any mend modifies a label at distance at least \(\radius\), therefore it must also modify at least one label every distance \(2r\) on a path from \(v\) to distance at least \(\radius\). This shows \(\radius \leq 2r\times\minvolume\).

Observe further that there is an easy strategy to ensure that \(\maxvolume \leq |\neigh_{\radius}|\): consider a naive mender \(\mender\) that always selects the next step to be the neighborhood of the current state with probability 1, i.e. \(\mender_{G,\lambda,v}(W, \neigh_1^G(W)) = 1\). By construction, it will halt after \(\radius\) steps on the final state \(\neigh_{\radius}^G(\set{v})\) since by definition the radius-\(\radius\) neighborhood always contains a mend.
Using the minimum, the expectation, and the maximum in Definition~\ref{def:complexity_generalized} leads to obvious inequalities \(\minvolume' \leq \avgvolume \leq \maxvolume\).

These observations provide the following hierarchy of mending complexities (we ignore constant multiplicative factors here): 
\begin{equation}\label{eq:complexity_hierarchy}
\radius \leq \minvolume \leq \avgvolume \leq \maxvolume \leq |\neigh_{\radius}|
\end{equation}

As shown in~\cite{local_mending}, the mending radius can only exhibit complexities on trees that are \(O(1)\), \(\Theta(\log n)\), or \(\Theta(n)\), and this is further restricted to \(O(1)\) or \(\Theta(n)\) on paths and cycles. Since our definition of \(\radius\) is equivalent, the same gaps hold. In the following settings, the hierarchy from Equation~\ref{eq:complexity_hierarchy} collapses:
\begin{itemize}
    \item if \(\radius\) is \(\Theta(n)\), so are all the other greater measures;
    \item if \(\radius\) is \(O(1)\), then \(|\neigh_{\radius}|\) is also \(O(1)\) and also all the smaller measures.
\end{itemize}

Hence the landscape is only diverse when the \(\radius\) is logarithmic and the other measures are between \(\Omega(\log n)\) and \(O(n)\). In Section~\ref{sec:separations}, we will exhibit separations between these measures on trees. They will necessarily rely on problems that have radius \(\Theta(\log n)\) since no separations exist otherwise.

\subsection{Separations}\label{sec:separations}

Equation~\ref{eq:complexity_hierarchy} shows a chain of inequalities between the different measures of mending that we have introduced. We can complement this equation by showing that all measures we have introduced are distinct from each other. This is done by proving that well-chosen LCL problems and classes of graphs feature \textit{separations} between these measures: on some instance, one measure of mending is arbitrarily smaller than another.

In Section~\ref{sec:separation_avg_max} we show that \(\avgvolume\) is distinct from \(\maxvolume\), and in Section~\ref{sec:separation_min_avg} we show that \(\minvolume\) is distinct from \(\avgvolume\).
These results are complemented by Section~\ref{sec:separations_appendix} in which we show that
\begin{itemize}
    \item \(\maxvolume\) is distinct from \(|\neigh_{\radius}|\) on balanced trees (Section~\ref{sec:separation_max_neigh});
    \item \(\radius\) is distinct from \(\minvolume\) using propagation problems (Section~\ref{sec:separation_rad_min});
    \item \(\minvolume\) is distinct from \(\avgvolume\) (Section~\ref{sec:separation_min_avg_general}); in this case, the separation does not rely on a promise unlike in Section~\ref{sec:separation_min_avg}, but it only works on general graphs instead of trees.
\end{itemize}

\subsubsection{\texorpdfstring{\(\avgvolume\)}{EMVol} is distinct from \texorpdfstring{\(\maxvolume\)}{DMVol} on trees and general graphs}\label{sec:separation_avg_max}
Most propagation problems described in Section~\ref{sec:existential_landscape}---for example the one with polynomial complexity from Section~\ref{sec:volume_polynomial}---feature a separation between \(\avgvolume\) and \(\maxvolume\) when executed on finite trees that are not necessarily balanced. Indeed when mending such a propagation problem on an unbalanced tree, the expected mending volume is the same as the existential mending volume as proved in Section~\ref{sec:transfer_finite_trees}. However there is no way to deterministically guarantee that the subtree explored is not a very unbalanced one, which could have size up to \(\Theta(n)\). This way, we have already shown several examples of problems with expected mending volume \(o(n)\) but deterministic mending volume \(\Theta(n)\).

\subsubsection{\minvolumetitle{} is distinct from \avgvolumetitle{} on trees with a global promise}\label{sec:separation_min_avg}

For this separation, we again resort to a promise. In fact we conjecture that on non-promise families of trees \(\minvolume\) is equivalent to \(\avgvolume\). The promise is as follows: we guarantee that the input tree is balanced and that there exists at least one vertex with a degree of exactly 2. The chosen problem is described in Problem~\ref{prob:deg_two_sink}.

\begin{problemdef}[ht]
    \caption{Degree-2 sink\problemcaptionkludge}
    \label{prob:deg_two_sink}
    \vspace{-\topsep}\begin{description}[noitemsep]
        \item[Input:] A balanced tree with at least one vertex of degree exactly 2 
        \item[Labels:] Orientations of the edges
        \item[Task:] Orient the edges so that vertices have out-degree 0 only if they have degree exactly 2.
    \end{description}\vspace{-1.1\topsep}
\end{problemdef}

A solution to this problem requires finding a specific vertex of a tree that has degree 2. Even with access to randomness, there is no algorithm that can guarantee to find such a vertex in fewer than \(\Theta(n)\) queries. On the other hand, if there is a guarantee that such a vertex exists, then it is at distance \(O(\log n)\) and can easily be found if one has access to a complete view of the graph. These properties make the \(\avgvolume\) linear, while the \(\minvolume\) as well as the \(\radius\) are only logarithmic.

As mentioned earlier, a similar separation result between \(\minvolume\) and \(\avgvolume\) is developed in Section~\ref{sec:separation_min_avg_general}. It uses general graphs instead of trees, but it also does not need to rely on a global promise unlike the separation on trees shown here.

\section*{Acknowledgments}
This work was supported in part by the Academy of Finland, Grant 333837.



\bibliography{literature}

\newpage
\appendix

\section{Separations}\label{sec:separations_appendix}

As announced in Section~\ref{sec:separations} we complement Equation~\ref{eq:complexity_hierarchy} by proving separations between the different measures of mending featured, as well as some equivalences between different definitions of the same measure.

\subsection{Equivalent definitions}
\label{sec:equivalent_defs}

The study of the landscape of the different measures of mending in Section~\ref{sec:hierarchy} assumes some equivalences between different definitions of the same measure. In particular we must prove that \(\minvolume\) introduced in Definition~\ref{def:existential_volume} is asymptotically equivalent to \(\minvolume'\) introduced in Definition~\ref{def:complexity_generalized}. We must also show that \(\radius\) from Definition~\ref{def:complexity_generalized} is equivalent to the notion of mending radius defined in~\cite{local_mending}.

The usage of ``\(\min\)'' in the definitions of \(\minvolume'\) and \(\radius\) suggests that as long as a process explores sufficiently many configurations, it will have \(\minvolume'\) and \(\radius\) close to the optimum. In particular, a mender that has all subsets as its reachable final states has the minimum possible size of a set that contains a mend as \(\minvolume'\), and the minimum radius as \(\radius\). Note that there exists such a mender, for example a mender that always explores one neighbor of its current explored set with equal probability.

\begin{lemma}[Equivalence of \(\minvolume\) and \(\minvolume'\)]\label{lem:minvolume_equivalence}
    \(\minvolume\) is asymptotically equivalent to \(\minvolume'\).
\end{lemma}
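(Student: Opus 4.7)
The plan is to prove the two-sided bound $\minvolume(\problem) \le \minvolume'(\mender, G)$ for every mender $\mender$, and $\minvolume'(\mender^{\star}, G) \le (\Delta^{r}+1)\cdot\minvolume(\problem)$ for a particular ``exhaustive'' mender $\mender^{\star}$, where $\Delta$ is the maximum degree and $r$ is the verifier radius of $\problem$. Since both are constants of the LCL, the $(\Delta^{r}+1)$ factor is absorbed by the $O(\cdot)$ notation and the two measures coincide asymptotically.

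For the lower bound direction, I would fix an arbitrary mender $\mender$ and an instance $(G,\lambda,v)$ that realises $\minvolume(\problem)$. By the \textsf{Termination} clause of Definition~\ref{def:mender}, every reachable final state $W_F \in \final^{\mender}_{G,\lambda,v}(\{v\})$ contains some mend $\lambda'$ of $\lambda$ at $v$ with $\diff(\lambda,\lambda')\subseteq W_F$, and therefore $|W_F|\ge|\diff(\lambda,\lambda')|\ge\min_{\lambda''}|\diff(\lambda,\lambda'')|$. Minimising over $W_F$ and maximising over $(\lambda,v)$ yields $\minvolume(\problem)\le\minvolume'(\mender,G)$ directly.

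For the upper bound direction, I would define $\mender^{\star}$ as follows: at every non-final state $W$, move to $W\cup\{u\}$ with uniform positive probability for each $u\in\neigh_1^{G}(W)\setminus W$; at every final state, self-loop with probability $1$. I would then verify that $\mender^{\star}$ satisfies \textsf{Progress}, \textsf{Termination}, and \textsf{Locality}. Its key structural property is that every enumeration $v=u_0,u_1,\ldots,u_m$ of a connected vertex set rooted at $v$ with connected prefixes $\{u_0,\ldots,u_i\}$ occurs along a positive-probability trajectory of $\mender^{\star}$, unless the trajectory halts earlier at a final state---which only decreases the realised $|W_F|$.

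To bound $\minvolume'(\mender^{\star},G)$, I would pick an optimal mend $\lambda'$ with $D\coloneqq\diff(\lambda,\lambda')$ and $|D|=\minvolume(\problem)$, observing that $v\in D$ because $\lambda(v)=\bot\ne\lambda'(v)$. Since a minimum-size mend is in particular a minimal one, the connectivity argument recalled at the start of Section~\ref{sec:hierarchy} applies, so $W^{\star}\coloneqq\neigh_r^{G}(D)$ is connected and $|W^{\star}|\le(\Delta^r+1)|D|$. A BFS enumeration of $W^{\star}$ from $v$ then has connected prefixes, so $\mender^{\star}$ follows it with positive probability and terminates at a final state $W_F\subseteq W^{\star}$ of size at most $(\Delta^r+1)\cdot\minvolume(\problem)$. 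The main subtlety I expect is the early-termination argument: I need that halting at the first final state encountered along the enumeration is compatible with the bound, which is immediate because $W^{\star}\supseteq D$ already witnesses a valid mend, so the termination step can only happen at or before the step at which the explored set first reaches $W^{\star}$.
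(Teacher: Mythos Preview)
Your proposal is correct and follows essentially the same approach as the paper: both directions rest on (i) every final state of any mender containing a mend, giving $\minvolume \le \minvolume'$, and (ii) the exhaustive one-neighbor-at-a-time mender together with the connectivity of $\neigh_r^G(D)$ for a minimal mend $D$ (the argument from Section~\ref{sec:hierarchy}), giving $\minvolume' \le \Delta^r \cdot \minvolume$. Your write-up is simply more explicit about constructing $\mender^{\star}$, verifying Definition~\ref{def:mender}, and handling early termination along the BFS trajectory, whereas the paper compresses this into the observation preceding the lemma that such a mender reaches every connected subset containing $v$.
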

\begin{proof}
    We can easily show \(\minvolume = O(\minvolume')\): \(\minvolume\) is the minimum size of a mend, while \(\minvolume'\) is the minimum size of a connected set that contains a mend.

    Reciprocally by the same argument as in Section~\ref{sec:hierarchy}, the radius-\(r\) neighborhood of the set of relabeled vertices is connected, and it thus is of greater size than the set whose size is measured by \(\minvolume'\). Thus \(\minvolume' \leq \Delta^r\times\minvolume\).
\end{proof}

\begin{lemma}[\(\radius\) is the mending radius]\label{lem:radius_equivalence}
    \(\radius\) defined in Definition~\ref{def:complexity_generalized} is equivalent to the mending radius defined in~\cite{local_mending}.
\end{lemma}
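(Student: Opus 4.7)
The plan is to show that the two measures agree after adopting the infimum-over-menders convention spelled out after Definition~\ref{def:complexity_generalized}. Recall that the mending radius of~\cite{local_mending} is the purely graph-theoretic quantity
\[
\rho(G) \coloneqq \max_{\lambda,v}\,\min_{\lambda'}\,\max_{w\in\diff(\lambda,\lambda')} d(v,w),
\]
where $\lambda'$ ranges over mends of $\lambda$ at $v$. I would therefore prove that for every input graph $G$ the infimum of $\radius(\mender,G)$ over valid menders $\mender$ equals $\rho(G)$. The proof naturally splits into a lower bound valid for every mender and a matching upper bound realized by an explicit construction.

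For the lower bound, I would fix any mender $\mender$, partial solution $\lambda$, and hole $v$, and consider any reachable final state $W_F$. By \textsf{Termination}, $W_F$ contains a mend $\lambda'$ of $\lambda$ at $v$, so $\diff(\lambda,\lambda')\subseteq W_F$ and therefore $\max_{w\in\diff(\lambda,\lambda')}d(v,w)\le\max_{w\in W_F}d(v,w)$. Minimizing the right-hand side over reachable $W_F$ thus produces a value no smaller than the per-instance graph-theoretic mending radius; taking the max over $(\lambda,v)$ yields $\radius(\mender,G)\ge\rho(G)$ uniformly in $\mender$.

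For the matching upper bound, I would use the deterministic layer-by-layer mender $\mender^{\textup{BFS}}$ already mentioned informally in Section~\ref{sec:hierarchy}: from any non-final $W$, move deterministically to $\neigh_1^G(W)$, and self-loop once $W$ contains a mend. \textsf{Progress} and \textsf{Termination} are immediate, and \textsf{Locality} holds because $\neigh_1^G(W)$ is by definition a function of the one-hop neighborhood of $W$. Starting from $\{v\}$, the unique reachable final state is the ball $\neigh_k^G(\{v\})$ for the smallest $k$ such that some mend has its relabeled set contained there; by the definition of $\rho$, this $k$ is exactly the per-instance mending radius of $(G,\lambda,v)$: on one hand, any mend of minimum radius $k^*$ has its relabeled set inside $\neigh_{k^*}^G(\{v\})$, so $k\le k^*$; on the other hand, a mend contained in $\neigh_k^G(\{v\})$ has radius at most $k$, so $k\ge k^*$. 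Maximizing over $(\lambda,v)$ shows $\radius(\mender^{\textup{BFS}},G)=\rho(G)$, closing the equivalence.

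The only obstacle is a mild one: reconciling the fact that $\radius(\mender,G)$ is parameterized by a mender while $\rho(G)$ is purely graph-theoretic, so the equivalence must be understood in the infimum-over-menders sense already used for $\minvolume'$ in Lemma~\ref{lem:minvolume_equivalence}. Once this convention is in place, the BFS construction witnesses that the infimum is attained, while the containment $\diff(\lambda,\lambda')\subseteq W_F$ provides the matching obstruction for every alternative mender.
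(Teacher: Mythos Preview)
Your proof is correct and follows the same high-level strategy as the paper (exhibit a witness mender realizing the graph-theoretic mending radius, together with the trivial lower bound). The only difference is the choice of witness: the paper's one-line proof invokes the random one-neighbor-at-a-time mender introduced in the paragraph preceding the lemma, for which ``all connected subsets of \(V\) are reachable states,'' and then lets the \(\min\) in Definition~\ref{def:complexity_generalized} pick out the optimal radius; you instead use the deterministic BFS mender from Section~\ref{sec:hierarchy}, whose unique reachable final state is already the smallest ball containing a mend. Your route is a touch more self-contained---it avoids having to argue about which absorbing states are actually reachable under the random mender---while the paper's route reuses a single mender to handle both \(\minvolume'\) and \(\radius\) simultaneously.
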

\begin{proof}
    Assuming that all connected subsets of \(V\) are reachable states of the mender, \(\radius\) is then by definition the minimum radius of a mend.
\end{proof}

\subsection{\maxvolumetitle{} is distinct from \neighsizetitle{} on balanced trees }\label{sec:separation_max_neigh}

As we will show later in Theorem~\ref{thm:maxvolume_trivial}, the \(\maxvolume\) exhibits very few different complexities on trees, and it is always asymptotically of the same size as \(\neigh_{\radius}\). As such a separation between \(\maxvolume\) and \(|\neigh_{\radius}|\) needs to rely on a promise on the structure of the graph. A sufficient promise is that the input tree is a balanced binary tree. We will consider the problem of sinkless orientation, defined in Problem~\ref{prob:SO_balanced_trees}.
\begin{problemdef}[b]
    \caption{Sinkless orientation on balanced trees\problemcaptionkludge}
    \label{prob:SO_balanced_trees}
    \vspace{-\topsep}\begin{description}[noitemsep]
        \item[Input:] A balanced binary tree
        \item[Labels:] Orientations of the edges
        \item[Task:] Orient the edges so that all vertices with degree at least 2 have out-degree at least 1.
    \end{description}\vspace{-1.1\topsep}
\end{problemdef}
Note that the verifier has radius 1, where valid configurations are those where either the vertex has at most one neighbor, or it has out-degree at least 1. A mending procedure works as follows: starting from the initial hole, explore exactly one path to any leaf. This path can be mended simply by ensuring that all of its edges are directed towards the leaf. Since any leaf is acceptable, the explored path can be computed locally in a deterministic manner, and it has size \(O(\log n)\) as all leaves are at an at most logarithmic distance from the hole.

A worst-case instance is one where all edges are oriented towards the hole and labeling the hole would produce a sink. This case can require that at least the path from the hole to the nearest leaf is relabeled, which produces a mending radius of at least \(\Omega(\log n)\) if all leaves are at an at least logarithmic distance from the hole.

Thus sinkless orientation on balanced trees has \(\maxvolume(\problem) = \Theta(\log n)\) whereas \(|\neigh_{\radius}| = \Theta(n)\)

\subsection{\radiustitle{} is distinct from \minvolumetitle{} on trees and general graphs}\label{sec:separation_rad_min}

In Sections~\ref{sec:volume_polylog} and~\ref{sec:volume_polynomial} we have constructed problems that exhibit \(\minvolume\) either \(\Theta(\log^k n)\) or \(\Theta(n^\alpha)\) while having \(\radius = \Theta(\log n)\). These problems illustrate a separation between \(\radius\) and \(\minvolume\).

\subsection{Separating \avgvolumetitle{} from \minvolumetitle{} without a promise}\label{sec:separation_min_avg_general}

We observed that \(\avgvolume\) is equal to \(\minvolume\) in the case of propagation problems on both finite and infinite rooted trees. We will next show that \(\avgvolume\) and \(\minvolume\) can be separated in general graphs. We will therefore construct a family of directed graphs and a simple problem that together lead to existential mending volume complexity \(\minvolume = O(\log n)\) and expected mending volume complexity \(\avgvolume = \omega(\log n)\). We then show that this family of graphs can be extended to all undirected graphs while preserving the upper bound on \(\minvolume\). This construction will also naturally preserve the lower bound on \(\avgvolume\).

\subsubsection{Problem description}\label{sec:separation_problem}

We first define the family of graphs \(\tree = \bigcup_{h\in\NN} \tree_h\) built by adding horizontal layers to a tree skeleton in the following way.
\begin{definition}[\(\tree\)]\label{def:layered_trees}
    Start with the balanced binary rooted tree \(t_h\) that has height \(h+1\) for any \(h\in\mathbb{N}\). We will assign coordinates to each of the vertices, consisting of the distance from the root and the horizontal index. The root gets coordinate \((0,0)\) and the children of \((i,j)\) are assigned coordinates \set{(i+1, 2j), (i+1, 2j+1)}.

    We add an undirected edge between the sibling vertices \((i,j) - (i,j+1)\) for every \(0 \leq i < h\) and \(0 \leq j < 2^i - 1\). Additionally, consider a vertex \((h,j_0)\) on the bottom layer for some \(0 \leq j_0 < 2^i\) and add a directed edge from \((h,j) \to (h,j+1)\) for every \(0 \leq j < j_0\) and from \((h,j) \leftarrow (h,j+1)\) for every \(j_0 \leq j < 2^i-1\).

    This way, the pair \((h,j_0)\) uniquely defines one graph \(t_{h,j_0}\) with \(h\) layers and all edges on the bottom layer pointing towards the vertex at position \(j_0\). We call the vertex \((h,j_0)\) the \textit{sink} of \(t_{h,j_0}\).

    The infinite family of graphs $\tree$ is then defined as \(\tree \coloneqq \bigcup_{h\in\NN} \tree_h = \bigcup_{h\in\NN} \set{t_{h,j_0} \mid 0 \leq j_0 < 2^h}\).
\end{definition}
One such graph is illustrated in Figure~\ref{fig:example_layered_tree}, with a height of 5 and a sink at position 4.

\begin{figure}[tb]
    \centering\includegraphics[width=0.6\textwidth]{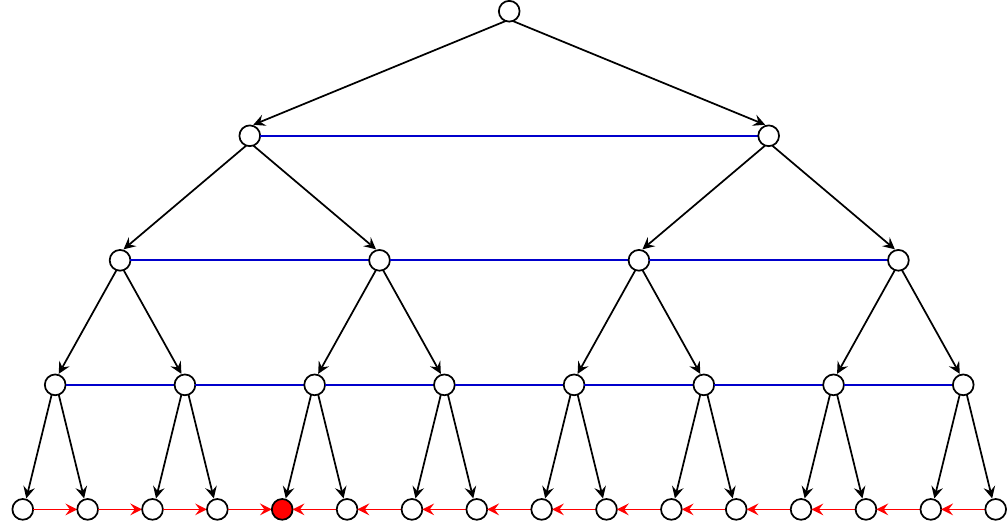}
    \caption{A visualization of a \(t_{5,4}\). Elements of \(\tree\) are entirely characterized by their height (here 5) and the position of the sink (marked red at position 4). The \(t_{5,4}\) features a skeleton in the form of a balanced binary oriented tree, to which horizontal unoriented links (in blue) are added. On the bottom layer, there are oriented edges (in red) towards the sink.}
    \label{fig:example_layered_tree}
\end{figure}

For this family of graphs \(\tree\), we define Problem~\ref{prob:path_to_sink}.
\begin{problemdef}[b]
    \caption{Path to the sink\problemcaptionkludge}
    \label{prob:path_to_sink}
    \vspace{-\topsep}\begin{description}[noitemsep]
        \item[Input:]  Some \(t_{h,j_0} \in \tree\)
        \item[Labels:] \textsf{red} and \textsf{black}
        \item[Task:] Color the vertices so that
        \begin{itemize}
            \item the root is \textsf{red}
            \item any \textsf{red} vertex that is not the sink has at least one \textsf{red} child.
        \end{itemize}
    \end{description}\vspace{-1.1\topsep}
\end{problemdef}
Observe that for any graph of \(\tree\), there is exactly one coloring that is complete and satisfies the labeling constraints from Problem~\ref{prob:path_to_sink}. This case is visualized in Figure~\ref{fig:t_54_unique_col} for the input graph \(t_{5,4}\). In this coloring, all vertices are colored white except for the unique path from the root to the sink. Note that there are many valid incomplete labelings as visualized in Figures~\ref{fig:t_54_partial_col_1}--\ref{fig:t_54_partial_col_3}. If some leaves are unlabeled, then the vertices above them can be colored \textsf{red}, and if the root is unlabeled then the entire graph can be colored \textsf{black}.

\begin{figure}[tb]
    \begin{subfigure}[b]{0.45\textwidth}
        \centering
        \includegraphics[width=\textwidth]{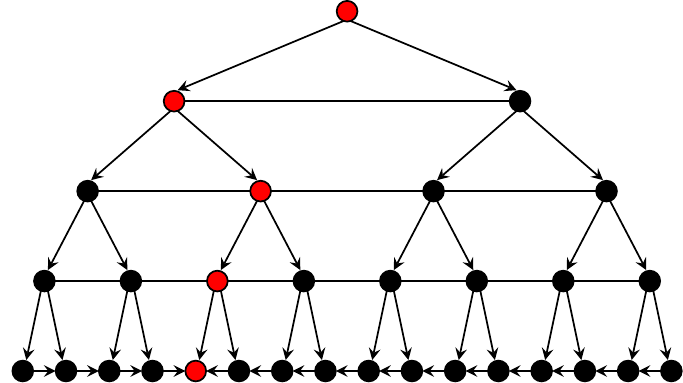}
        \caption{The unique valid complete labeling of \(t_{5,4}\)\\}
        \label{fig:t_54_unique_col}
    \end{subfigure}
    \hfill
    \begin{subfigure}[b]{0.45\textwidth}
        \centering
        \includegraphics[width=\textwidth]{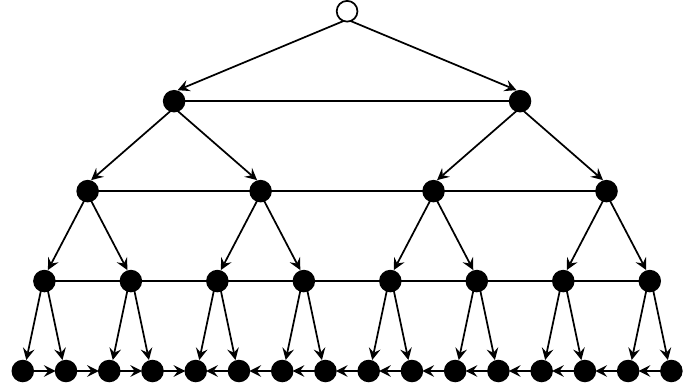}
        \caption{Partial coloring where all nodes are black besides the uncolored root.}    
        \label{fig:t_54_partial_col_1}
    \end{subfigure}
    \begin{subfigure}[b]{0.45\textwidth}
        \centering
        \includegraphics[width=\textwidth]{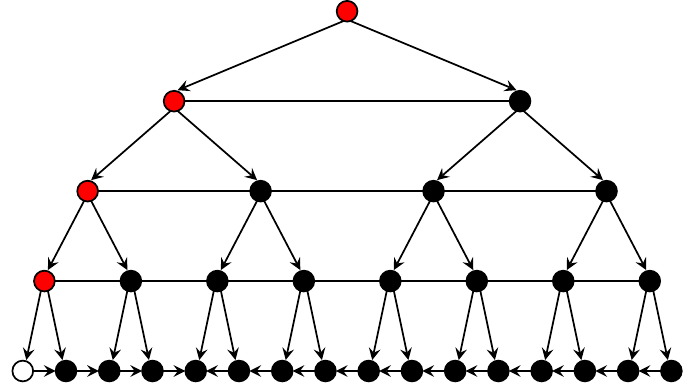}
        \caption{Partial coloring with one red path down to a child.}
        \label{fig:t_54_partial_col_2}
    \end{subfigure}
    \hfill
    \begin{subfigure}[b]{0.45\textwidth}
        \centering
        \includegraphics[width=\textwidth]{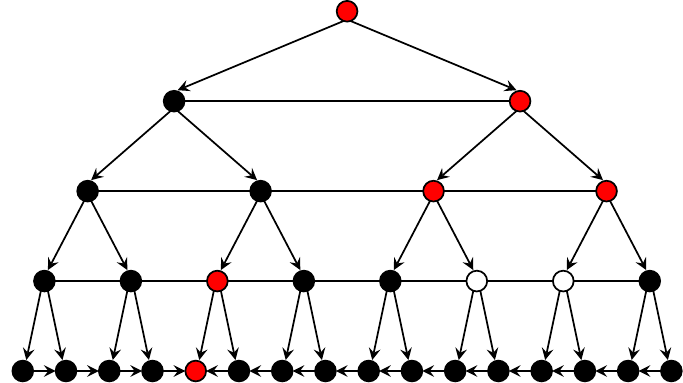}
        \caption{Partial coloring with several red subtrees.\\}
        \label{fig:t_54_partial_col_3}
    \end{subfigure}
    \caption{Example of four possible colorings of \(t_{5,4}\).}
\end{figure}

We will show in Sections~\ref{sec:minvolume_of_path_to_sink} and~\ref{sec:avgvolume_of_path_to_sink} that this problem has logarithmic \(\minvolume\), yet its \(\avgvolume\) is \(\omega(\log n)\). We conjecture the complexity of \(\avgvolume\) to even be \(\Omega(\log^2 n)\). This shows that \(\minvolume\) and \(\avgvolume\) are distinct from each other on oriented graphs, which is a separation that is not observed on trees and proven to not exist on rooted trees. This result extends to unoriented graphs thanks to an encoding of oriented graphs described in Section~\ref{sec:unorient}.

\subsubsection{Study of \minvolumetitle}\label{sec:minvolume_of_path_to_sink}

We now show that Problem~\ref{prob:path_to_sink} has logarithmic \(\minvolume\). A mending algorithm that uses a complete view of the graph to compute a mend with volume \(O(\log n)\) is described in Algorithm~\ref{alg:mending_path_to_sink}.
\begin{algorithm}[tb]
    \begin{algorithmic}[1]
        \Input
            \Desc{\(t\)}{input graph from \(\tree\)}
            \Desc{\(\lambda\)}{initial labeling}
            \Desc{\(v\)}{hole of \(\lambda\)}
        \EndInput
        \State \(\textsf{current} \gets v\)
        \State \textbf{\textsf{Phase 1}}
        \Loop
            \State \(\lambda(\textsf{current}) \gets \textsf{black}\)
            \State \(\textsf{parent} \gets \text{parent of \(\textsf{current}\)}\)
            \State \(\textsf{other} \gets \text{other child of \(\textsf{parent}\)}\)
            \If{\(\lambda(\textsf{parent}) = \textsf{black}\)
            \Or \(\lambda(\textsf{parent}) = \bot\)
            \Or \(\lambda(\textsf{parent}) = \lambda(\textsf{other}) = \textsf{red}\)
            }
                \State \Return
            \ElsIf{the sink is a descendant of \textsf{parent}}
                \State \(\textsf{current} \gets \textsf{parent}\)
                \State go to \textsf{Phase 2}
            \Else
                \State \(\textsf{current} \gets \textsf{parent}\)
            \EndIf
        \EndLoop
        \State \textbf{\textsf{Phase 2}}
        \State Recolor \textsf{red} a path from \textsf{current} to the sink
    \end{algorithmic}
    \caption{Mending procedure of Problem~\ref{prob:path_to_sink}}
    \label{alg:mending_path_to_sink}
\end{algorithm}

\begin{lemma}[Algorithm~\ref{alg:mending_path_to_sink} is correct]
    When Algorithm~\ref{alg:mending_path_to_sink} terminates, \(\lambda\) is a valid mend of the initial input labeling.
\end{lemma}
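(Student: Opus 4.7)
The plan is to split the correctness claim into termination, progress, and validity. Termination is immediate: Phase~1 strictly ascends the rooted tree, so it executes at most $O(\log n)$ iterations, and Phase~2 traces a descending walk from \textsf{current} to the sink along a path of length at most the depth. Progress is also direct: the only writes performed are assignments to \textsf{black} or \textsf{red}, so no vertex is ever demoted to $\bot$, and the first assignment $\lambda(\textsf{current}) \gets \textsf{black}$ of the first iteration labels $v$, so $v$ leaves the set of holes (and if Phase~2 later overwrites it to red, $v$ is still labeled).

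For validity, I would maintain the following invariant through Phase~1: after each complete execution of the loop body, the current labeling is a valid partial solution everywhere \emph{except possibly} at \textsf{parent}, where the ``red non-sink has a red child'' obligation may be pending. The invariant rests on the observation that blackening a vertex $u$ creates at most one fresh local violation, namely at $u$'s parent: \textsf{black} carries no outgoing constraint on $u$ itself, children's constraints refer to their own children rather than to $u$, and siblings are unaffected. The conditional in the loop then discharges the pending obligation at \textsf{parent} in the three possible ways: (a)~if \textsf{parent} is $\textsf{black}$, $\bot$, or both \textsf{parent} and \textsf{other} are \textsf{red}, then $\phi^*$ is already happy at \textsf{parent} and the algorithm returns; (b)~if the sink is a descendant of \textsf{parent}, Phase~2 will supply a red child for \textsf{parent} along the path to the sink; (c)~otherwise, promoting $\textsf{current} \gets \textsf{parent}$ means the next iteration blackens \textsf{parent}, extinguishing the constraint there (black has no requirement) and re-raising an analogous obligation at the grandparent---which is exactly the invariant one step higher.

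For Phase~2, repainting red a descending path from \textsf{current} to the sink preserves $\phi^*$ everywhere: each red non-sink vertex on the path has its path-successor as a red child, the sink is exempt, and every vertex off the path keeps its original label. The delicate sub-case is when the descending path passes through the vertex that was blackened in the final iteration of Phase~1: Phase~2 overwrites it back to \textsf{red}, which is consistent because that vertex now has a red child (the next vertex on the path) and thus also fulfils \textsf{parent}'s red-child obligation.

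I expect the main obstacle to be formalising case~(c) of Phase~1: confirming that the transfer of the invariant to the next iteration does not silently re-break a previously-satisfied vertex. This reduces to the observation that the only vertex whose label has just changed is \textsf{current}, and the only vertex whose $\phi^*$-value depends on that change is \textsf{parent}---precisely the vertex at which the invariant already tolerates a pending obligation. Once this one-step monotonicity is established, the invariant propagates cleanly to termination, where the algorithm either returns with $\phi^*$ globally satisfied or enters Phase~2 with a cleanly specified remaining obligation that Phase~2 is verified to fulfil.
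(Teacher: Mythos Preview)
Your proposal is correct and takes essentially the same approach as the paper: the paper's invariant is that at the start of each iteration all vertices other than \textsf{current} are validly labeled, which is your invariant restated at a different point in the loop, and the ensuing case analysis for the three branches and for Phase~2 matches. Your explicit treatment of termination and of the Phase~2 sub-case where the last-blackened vertex is overwritten back to \textsf{red} are refinements the paper leaves implicit.
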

\begin{proof}
    Observe that Algorithm~\ref{alg:mending_path_to_sink} immediately assigns a label to the hole that it mends, and it never writes a \(\bot\) label. Therefore Algorithm~\ref{alg:mending_path_to_sink} is correct if and only if it outputs a valid labeling. An invariant of Algorithm~\ref{alg:mending_path_to_sink} is that at the beginning of any iteration of the main loop, all vertices other than \textsf{current} have a valid labeling.

    This invariant is preserved: recoloring the \textsf{current} vertex \textsf{black} makes \textsf{current} correctly labeled, and can at most introduce an incompatibility in the labeling of its \textsf{parent}. In such a case, we immediately perform \(\textsf{current} \gets \textsf{parent}\) which restores the invariant.
    The situations in which relabeling \textsf{current} does not introduce an inconsistency in the neighborhood of \textsf{parent} are the following: \textsf{parent} is already \textsf{black}, or \textsf{parent} is \textsf{red} and it has another \textsf{red} child. In these situations, the execution terminates with a valid labeling.

    Phase 2 is only ever reached in the following situation: \textsf{current} is labeled \textsf{red} with two \textsf{black} children, the sink is one of its descendants, and all vertices other than \textsf{current} are correctly labeled. In this situation, relabeling a path from \textsf{current} to the sink to \textsf{red} restores all labeling constraints and results in a valid mend.
\end{proof}

Both phases terminate in no more than \(2h = 2 \log_2 n\) relabeling steps and thus the problem has \(\minvolume = O(\log n)\).

\subsubsection{Simulating the problem in general oriented graphs}\label{sec:path_to_sink_general_graphs}

We now transfer the problem to work on general graphs without any promise on the structure of the input. As a first step, we will keep the orientation and assume that the directed edges as well as the parent-child relationship are part of the graph. We first show that the class \(\tree\) from Definition~\ref{def:layered_trees} can be defined by local constraints on directed graphs. These constraints are listed in Problem~\ref{prob:structural_constraints}: the graphs that satisfy all of these constraints are exactly the elements of \(\tree\).

\begin{problemdef}[tb]
    \caption{Structural constraints for \(\tree\)\problemcaptionkludge}
    \label{prob:structural_constraints}
    \vspace{0.2\topsep}
    In a graph with some oriented edges, given a vertex \(v\), we define the following constraints
    \vspace{-0.5\topsep}
    \begin{description}[noitemsep]
        \item[1a.] \(v\) has either zero or one parent \(u\) and there is an edge \(u \to v\)
        \item[1b.] \(v\) has either zero or two children \(u_1,u_2\) and there are edges \(v \to u_i\)
        \item[2a.] \(v\) has no siblings if and only if it has no parent
        \item[2a'.] \(v\) has exactly one sibling only if its parent has exactly one sibling
        \item[2a''.] \(v\) has no more than two siblings
        \item[2b.] if \(v\) has children then they are siblings
        \item[2c.] if \(v\) has children \(v_1\) and \(v_2\) and \(u\) is a sibling of \(v\), then \(u\) has children \(u_1\) and \(u_2\) and exactly one of \(u_1\) or \(u_2\) is linked to exactly one of \(v_1\) or \(v_2\)
        \item[2c'] if \(v\) is a sibling of \(u\) then either they have the same parent, or their parents are siblings
        \item[3a.] the edges of \(v\) to its siblings are oriented if and only if \(v\) has no children
        \item[3b.] \(v\) does not have edges oriented outwards to both of its siblings
    \end{description}\vspace{-0.8\topsep}
\end{problemdef}

We call a vertex \emph{broken} if it does not satisfy all the constraints from Problem~\ref{prob:structural_constraints}. A (sub)graph that has no broken vertices is called \emph{well-formed}. Note that all graphs of \(\tree\) are well-formed. Conversely, a non-empty well-formed graph is in \(\tree\), as shown by the construction illustrated in Figure~\ref{fig:construct_tree}. Constraints \textsf{1a} and \textsf{1b} from Problem~\ref{prob:structural_constraints} guarantee that the graph has a binary tree structure, though it is not yet guaranteed to be balanced. Constraints \textsf{2b} and \textsf{2c} impose a unique way of adding edges between sibling vertices, and guarantee that the tree is balanced. The resulting structure can no longer be modified: \textsf{2a} forbids any additional sibling of the root (shown in red in Figure~\ref{fig:construct_tree_4}); \textsf{2a'} similarly prevents internal vertices from having additional siblings (shown in yellow); finally there are no available vertices that could be siblings of the vertices on the border (shown in blue) because of constraints \textsf{2c} and \textsf{2c'}. Constraints \textsf{3a} and \textsf{3b} then impose an orientation of the edges between sibling vertices of the bottom layer once a sink is chosen.

\begin{figure}[tb]
    \begin{subfigure}[b]{0.45\textwidth}
        \centering
        \includegraphics[width=\textwidth]{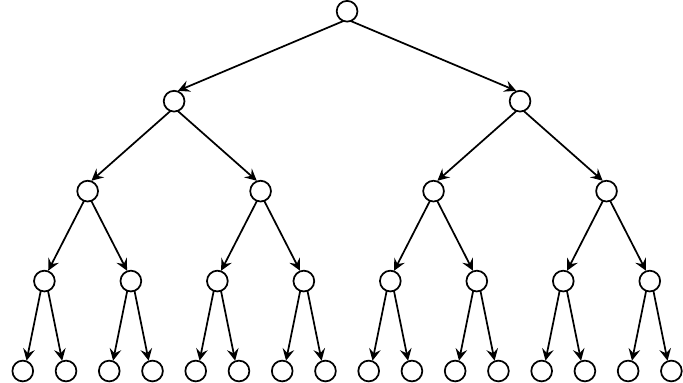}
        \caption{Step 1 -- apply constraints 1a and 1b}
        \label{fig:construct_tree_1}
    \end{subfigure}
    \hfill
    \begin{subfigure}[b]{0.45\textwidth}
        \centering
        \includegraphics[width=\textwidth]{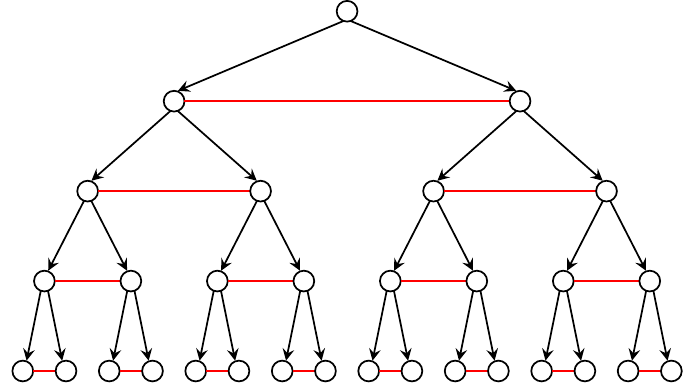}
        \caption{Step 2 -- apply constraint 2b}
        \label{fig:construct_tree_2}
    \end{subfigure}
    \begin{subfigure}[b]{0.45\textwidth}
        \centering
        \includegraphics[width=\textwidth]{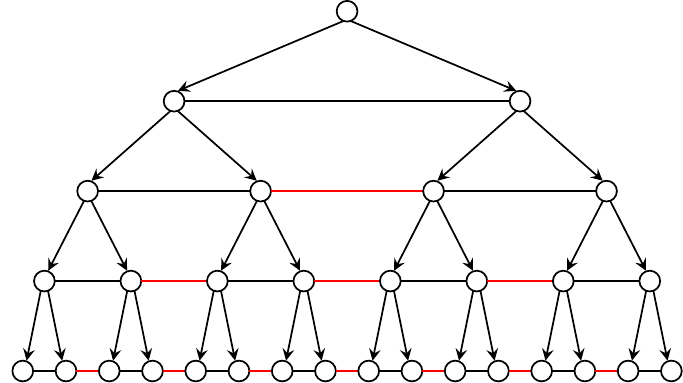}
        \caption{Step 3 -- apply constraint 2c}
        \label{fig:construct_tree_3}
    \end{subfigure}
    \hfill
    \begin{subfigure}[b]{0.45\textwidth}
        \centering
        \includegraphics[width=\textwidth]{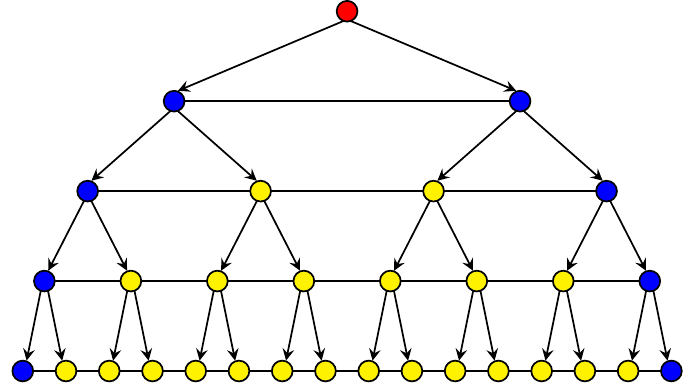}
        \caption{Step 4 -- no more links can be added}
        \label{fig:construct_tree_4}
    \end{subfigure}
    \caption{The construction of a well-formed graph results in a \(t_{h,j_0}\): at each step of the construction, there is only a single way to add edges in a way that satisfies the imposed local constraints.}
        \label{fig:construct_tree}
\end{figure}

It should be noted that a well-formed subgraph is not necessarily in \(\tree\). Any subgraph of a well-formed graph is well-formed, and there are many proper subgraphs from graphs of \(\tree\) that are not themselves in \(\tree\).  However, for any well-formed subgraph \(G' \subseteq G\), there exists a \(t_{h',j_0'} \in \tree\) of which \(G'\) is a subgraph.

In general graphs, we extend the problem so that graphs outside of \(\tree\) can also be labeled. In addition to the previous labeling rules from Problem~\ref{prob:path_to_sink}, any labeling is valid for a vertex that is broken (see Problem~\ref{prob:path_to_sink_oriented}).
\begin{problemdef}[tb]
    \caption{Path to the sink -- oriented graphs\problemcaptionkludge}
    \label{prob:path_to_sink_oriented}
    \vspace{-\topsep}\begin{description}[noitemsep]
        \item[Input:] An oriented graph with a parent-child relationship
        \item[Labels:] \textsf{red} and \textsf{black} (same as in Problem~\ref{prob:path_to_sink})
        \item[Task:] Vertices that satisfy all structural constraints from Problem~\ref{prob:structural_constraints} must also satisfy the labeling constraints from Problem~\ref{prob:path_to_sink}
    \end{description}\vspace{-1.1\topsep}
\end{problemdef}

\begin{lemma}[Preservation of \(\minvolume\) on oriented graphs]
    Problem~\ref{prob:path_to_sink_oriented} has the same existential mending volume as Problem~\ref{prob:path_to_sink}, namely \(\Theta(\log n)\).
\end{lemma}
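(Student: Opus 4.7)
I would split the proof into matching $\Omega(\log n)$ and $O(\log n)$ bounds.

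For the lower bound, every graph in $\tree$ satisfies all structural constraints of Problem~\ref{prob:structural_constraints} by construction, so all its vertices are well-formed and Problem~\ref{prob:path_to_sink_oriented} reduces exactly to Problem~\ref{prob:path_to_sink} on such inputs. The hard instance highlighted in Section~\ref{sec:minvolume_of_path_to_sink} (visualized in Figure~\ref{fig:t_54_partial_col_1}: the root is the only unlabeled vertex and every other vertex is \textsf{black}) then forces a \textsf{red} path of length $\Theta(\log n)$ from root to sink, transferring the $\Omega(\log n)$ bound directly.

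For the upper bound, I would extend Algorithm~\ref{alg:mending_path_to_sink} with stopping conditions that exploit broken vertices as unconstrained shortcuts: if the hole is broken, any label works and the mend has volume~$1$; otherwise, Phase~1 halts as soon as the parent of \textsf{current} is broken (no further constraint needs repair above) or as soon as the sibling of \textsf{current} is broken (the sibling is then recoloured \textsf{red} for free to satisfy a \textsf{red} parent), and symmetrically Phase~2 halts at the first broken vertex on its way to the sink. Correctness is immediate since broken vertices impose no labeling constraints.

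To bound the cost of this extended procedure by $O(\log n)$, I would exploit the rigid structure of well-formed components. The well-formed component $G' \ni v$ embeds as a subgraph of some $t_{h',j_0'} \in \tree$, and the interplay of the structural constraints \textsf{1b}, \textsf{2c} and \textsf{3a} forces well-formed non-leaf siblings to mirror each other's children---in particular, \textsf{3a} implies that a well-formed vertex and its sibling must agree on being leaves or not, since otherwise the orientation of the sibling edge is forced in two incompatible ways. Consequently, a shortcut-free execution of $d$ ascents of Phase~1 requires not just the spine and its siblings but also a progressively growing portion of the sibling subtrees to remain well-formed and non-broken, and a counting argument yields $|G'| \geq \Omega(2^d)$. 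Since $|G'| \leq n$, we conclude $d = O(\log n)$, and Phase~2 is bounded symmetrically. The main obstacle is turning this intuition into a precise counting lemma: one must track exactly which positions in each sibling subtree a shortcut would have used, and propagate the ``no broken vertex here'' requirement inductively to obtain the exponential blow-up.
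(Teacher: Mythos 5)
Your proposal takes essentially the same route as the paper: the paper likewise modifies Algorithm~\ref{alg:mending_path_to_sink} so that it also terminates upon finding a broken vertex among the descendants, and then argues that $d$ ascents without encountering a sink or broken vertex force a balanced binary tree of $\Omega(2^d)$ non-broken vertices rooted at the current node, whence $d = O(\log n)$. The counting step you flag as the main remaining obstacle is precisely the step the paper also treats at this informal level (via the doubling of the well-formed subtree at each ascent), so no additional machinery is needed beyond what you describe.
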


\begin{proof}
    The key property to the preservation of \(\minvolume = O(\log n)\) is the following: for any vertex, there exists a vertex at distance \(O(\log n)\) that is either a sink or broken.

    A modified version of Algorithm~\ref{alg:mending_path_to_sink} where ``the sink is a descendant of \textsf{parent}'' is replaced with ``the sink or a broken vertex is a descendant of \textsf{parent}'' terminates when a sink or a broken vertex is found, and it leaves a valid labeling.

    We now argue that it terminates in time \(O(\log n)\). If no sink or broken vertex is found among the descendants, then it implies the existence of a balanced binary tree of non-broken vertices rooted at the current vertex. Each time we move to the parent of the current vertex, the size of the corresponding tree doubles. Thus only \(O(\log n)\) many iterations can be done before the size of such a tree would exceed the total number of available vertices, and thus after \(O(\log n)\) iterations there has to be either a sink or a broken vertex among the descendants of the current vertex. Once there is such a descendant, it must be at distance \(O(\log n)\).
\end{proof}

Therefore Problem~\ref{prob:path_to_sink_oriented} has \(\minvolume = O(\log n)\) in general oriented graphs. Section~\ref{sec:unorient} will further strengthen this result by showing that the orientation can be encoded in unoriented graphs. A similar technique can be applied to eliminate the need for the parent-child relation in the input graph.

\subsubsection{Study of \avgvolumetitle}\label{sec:avgvolume_of_path_to_sink}

We consider an arbitrary algorithm that relies on randomness and its local view of the graph in order to find a mend. Throughout this study, we assume that the graph is among \(\tree_h\). Assuming a graph of this structure, the performance of the algorithm is at least as good as the performance if such an assumption is not available. Note that any mend of the worst-case initial configuration, i.e., a configuration which consists of an unlabeled root and where all other vertices are \textsf{black}, must contain a path from the root to the sink that has to be recolored. Hence any algorithm that computes a mend must be able to find the sink. We show that the simple fact of finding the sink is a problem that cannot be solved in \(O(\log n)\).

Consider an execution of the algorithm as the sequence of all explored vertices, and call the subsequence corresponding to the interval between the visit of two vertices at the bottom layer a \emph{query}. Since the vertices of the bottom layer are the only vertices holding information on the position of the sink, the algorithm can be assumed to have a binary-search-like structure as presented in Algorithm~\ref{alg:finding_sink}.

\begin{algorithm}[tb]
    \begin{algorithmic}[1]
        \Input
            \Desc{\(t\)}{input graph from \(\tree\)}
            \Desc{\(h\)}{height of \(t\)}
        \EndInput
        \State \(\textsf{low} \gets 1\); \(\textsf{high} \gets 2^h\)
        \While{\(\textsf{low} < \textsf{high}\)}
            \State choose \(\textsf{low} \leq l \leq \textsf{high}\)
            \State query \(l\) by exploring a path from any already explored vertex
            \If{the right edge of \(l\) is oriented away from \(l\)}
                \Comment the sink is right of \(l\)
                \State \(\textsf{low} \gets l\)
            \ElsIf{the left edge of \(l\) is oriented away from \(l\)}
                \Comment the sink is left of \(l\)
                \State \(\textsf{high} \gets l\)
            \Else
                \Comment the sink is \(l\)
                \State \(\textsf{low},\textsf{high} \gets l\)
            \EndIf
        \EndWhile
    \end{algorithmic}
    \caption{Generic structure of any algorithm for finding the sink in a graph of \(\tree\)}
    \label{alg:finding_sink}
\end{algorithm}

Algorithm~\ref{alg:finding_sink} describes a generic structure that fits any optimal algorithm able to find the sink in a graph from the family \(\tree\). Indeed since the only information about the position of the sink is recorded in the orientation of the edges on the bottom layer, the algorithm will necessarily explore a certain number of vertices from the bottom layer before it finds the sink. We can further assume that if the algorithm is optimal then it performs no useless query, i.e. if the algorithm has already determined that the sink is right of a certain vertex \(l\) then it will never again query vertices left of \(l\). 

We will separate the queries into two categories.
\begin{definition}[Short and long queries]
    In Algorithm~\ref{alg:finding_sink}, if vertex \(l\) is queried while \(\min(\textsf{high} - l, l - \textsf{low}) < h\), we call the query \emph{short}. Otherwise, we call the query \emph{long}.
\end{definition}

We consider the following loose bounds: if there are \(k\) candidates for vertices that could be the sink (i.e. \(\textsf{high} - \textsf{low} = k\)), then a long query requires the exploration of at least \(\omega(1)\) vertices and by performing it, it eliminates at most \(n/2\) vertices from the list of candidates to be the sink; a short query requires the exploration of at least \(\Omega(1)\) vertices, and, in the worst case, it eliminates at most \(h\) vertices from the list of candidates.

We show that no matter the order in which the algorithm performs long and short queries, it cannot guarantee to find the sink before exploring at least \(\omega(\log n)\) vertices.

\begin{lemma}\label{lem:alg_bound_long_queries}
    If Algorithm~\ref{alg:finding_sink} performs only long queries and explores \(O(\log n)\) vertices, then it can eliminate at most \(O(n^{1/4})\) of the candidates to be the sink.
\end{lemma}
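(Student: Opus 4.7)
My plan is to track the sequence of long queries $l_1, \ldots, l_t$ made by the algorithm and combine a per-query cost lower bound with the overall budget $\sum_i c_i \leq C \log n$, where $c_i$ denotes the number of new vertices explored during query $i$. Let $m_i$ denote the number of candidates eliminated by the $i$-th query. The strategy is to relate $c_i$ to $\log m_i$ and then optimize.

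First I would establish the per-query lower bound $c_i = \Omega(\log_2 m_i)$. Because every query is long, $l_i$ satisfies $\min(l_i - \textsf{low}_{i-1}, \textsf{high}_{i-1} - l_i) \geq h$, and the adversary, placing the sink on the larger side of each split, forces $m_i = \min(l_i - \textsf{low}_{i-1}, \textsf{high}_{i-1} - l_i) \geq h$. To reach $l_i$ from the already-explored set, the algorithm must extend its explored subtree to include $l_i$; since every previously queried bottom-layer vertex lies at horizontal distance at least $m_i$ from $l_i$ (the interior bounds $\textsf{low}_{i-1}, \textsf{high}_{i-1}$ being the closest such vertices, with older queries now outside $[\textsf{low}_{i-1}, \textsf{high}_{i-1}]$ only farther away), the newly explored tree-segment from the explored set down to $l_i$ has length at least $\log_2 m_i$. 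Moreover, the very first query inherits an additional $\Omega(\log n)$ cost for the initial descent from the root (the hole) to a leaf.

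Summing over queries gives $\sum_i c_i \geq \log n + \sum_i \log_2 m_i$. Combined with the budget $\sum_i c_i \leq C \log n$, this yields $\sum_i \log_2 m_i = O(\log n)$, equivalently $\prod_i m_i \leq n^{O(1)}$. Using the lower bound $m_i \geq h = \log_2 n$ from the long-query definition and accounting carefully for the constant in the budget—observing that the unavoidable root-to-leaf descent already swallows most of the $O(\log n)$ budget, leaving only a small residual for discrimination between candidate positions—the maximum of $\sum_i m_i$ subject to these constraints is concentrated on a single query and yields the claimed bound $\sum_i m_i = O(n^{1/4})$.

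The main obstacle is extracting precisely the exponent $1/4$ rather than a weaker polynomial bound. This will require careful bookkeeping of the constant hidden in the $O(\log n)$ budget and of the sharing of tree-paths between successive queries, so that the cost $c_i$ is charged only for truly new vertices and the per-query bound $c_i \geq \log_2 m_i$ does not double-count shared segments. A further subtle point is to verify that the adversary can consistently force $m_i$ to equal the smaller side of every split, rather than giving the algorithm a lucky branch that would eliminate the larger side; for a deterministic algorithm this is immediate, but in the randomized setting one must argue about an adversarial placement of the sink that is agnostic to the algorithm's coin tosses.
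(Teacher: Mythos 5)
There is a genuine gap, and it starts with what the lemma is actually asserting. Read together with Theorem~\ref{thm:alg_lb_queries} (``find the sink among \(n^{3/4}\) candidates''), the claim is about the \emph{multiplicative} shrinkage of the candidate interval: after all long queries, at least \(n^{3/4}\) candidates remain, i.e.\ the candidate set is reduced by a factor of at most \(n^{1/4}\). The paper's proof goes: each long query costs \(\omega(1)\) newly explored vertices, so an \(O(\log n)\) budget admits only \(o(\log n)\) long queries; each query at most halves the interval (the adversary keeps the sink in the larger side); hence the reduction factor is \(2^{o(\log n)} = n^{o(1)} = O(n^{1/4})\). You instead bound the \emph{additive} count of eliminated candidates \(\sum_i m_i\), and that quantity does not satisfy the claimed bound: a single query at the midpoint of the bottom layer costs one root-to-leaf descent of \(h = \log_2 n\) vertices, fits comfortably inside the \(O(\log n)\) budget, and eliminates \(\Theta(n)\) candidates even against the adversary (the discarded smaller half has size about \(n/2\)). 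Your own chain of inequalities also cannot produce the exponent \(1/4\): from \(\prod_i m_i \le n^{O(1)}\) with \(m_i \ge h\), the maximum of \(\sum_i m_i\) is attained by concentrating the product on a single factor and equals \(n^{\Theta(1)}\) with an exponent governed by the arbitrary constant hidden in the \(O(\log n)\) budget, so no amount of ``careful bookkeeping'' of that constant, nor the observation that the initial descent consumes \(\log n\) vertices, can pin it to \(1/4\).

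The good news is that your per-query cost estimate is exactly the right local ingredient, and in fact sharper than what the paper uses: since every long query has \(m_i \ge h\), your bound \(c_i = \Omega(\log_2 m_i) \ge \Omega(\log \log n) = \omega(1)\) immediately caps the number of long queries at \(O(\log n / \log\log n) = o(\log n)\). Feeding that into the halving argument---each long query reduces \(\textsf{high} - \textsf{low}\) by a factor of at most \(2\)---gives a reduction factor of at most \(2^{O(\log n/\log\log n)} = n^{O(1/\log\log n)} = n^{o(1)}\), which is eventually below \(n^{\alpha}\) for every fixed \(\alpha > 0\); the value \(1/4\) is just a convenient choice, as the paper itself remarks. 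So redirect your estimate from the sum \(\sum_i m_i\) to the number of queries, and the argument closes.
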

\begin{proof}
    Considering that each long query requires the exploration of \(\omega(1)\) vertices, the algorithm cannot perform more than \(o(\log n)\) of these long queries before it exceeds the \(O(\log n)\) limit placed on the total number of explored vertices. Write \((\log n)/f(n)\) for the amount of long queries performed, with \(f(n) \to +\infty\) when \(n\to+\infty\). These long queries reduce the number of candidates by a factor at most \(2^{(\log n)/f(n)} = n^{1/f(n)}\). It follows that after all long queries have been performed there still remain \(\Omega(n^{1 - 1/f(n)})\) candidates for the sink, which can be bounded from below by \(\Omega(n^{\alpha})\) for any \(0 < \alpha < 1\). For example, picking \(\alpha = 3/4\) provides the statement from Lemma~\ref{lem:alg_bound_long_queries}.
\end{proof}

In the following, we will show a lower bound on the number of short queries:

\begin{theorem}\label{thm:alg_lb_queries}
    Algorithm~\ref{alg:finding_sink} must explore at least \(\omega(\log n)\) vertices in order to find the sink.
\end{theorem}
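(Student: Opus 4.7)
The plan is to analyze any algorithm fitting the template of Algorithm~\ref{alg:finding_sink} by charging its cost to its two types of queries, and then arguing that an $O(\log n)$ exploration budget cannot be split between them in any way that pinpoints the sink. For a given execution let $L$ denote the number of long queries performed and $S$ the number of short queries. By the cost bounds stated just before the theorem, the total number of explored vertices is at least $L \cdot g(n) + S$, where $g(n) \to \infty$ is a uniform function capturing the $\omega(1)$ per-query cost of a long query. Assume for contradiction that this total is $O(\log n)$.

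Next I would track the size $k$ of the candidate interval $[\textsf{low}, \textsf{high}]$ throughout the execution. A long query reduces $k$ by at most a factor of two (its query vertex is at distance at least $h$ from both ends, so the surviving half has size at least $k/2$), while a short query reduces $k$ by at most $h = \log_2 n$. A swap argument --- any adjacent ``short then long'' pair can always be replaced by ``long then short'' which only decreases the resulting $k$ --- shows that among all orderings of a fixed multiset of queries, the minimum attainable final range is realized by placing all long queries first, giving $k_{\text{final}} \geq n/2^L - Sh$. Since the algorithm must isolate the sink, we need $k_{\text{final}} \leq 1$, hence $n/2^L \leq 1 + Sh$.

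I then conclude by a case analysis on $S$. If $S = \omega(\log n)$ then $L\,g(n) + S = \omega(\log n)$ already contradicts the hypothesis. Otherwise $S = O(\log n)$, so $Sh = O(\log^2 n)$ and the inequality above forces $2^L \geq n/O(\log^2 n)$, i.e., $L \geq (1 - o(1))\log_2 n$. Multiplying by $g(n) = \omega(1)$ gives $L\,g(n) = \omega(\log n)$, again a contradiction. Either way, the total exploration is $\omega(\log n)$, which proves the theorem.

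The main obstacle I expect lies in handling randomization cleanly and in making the ``cost is $\omega(1)$'' bound uniform. Since the algorithm may choose $l$ via coin flips, $L$ and $S$ are random variables; fortunately the deterministic range-shrinkage inequality $k_{\text{final}} \geq n/2^L - Sh$ holds pathwise, so the case analysis can be executed per realization and then averaged, or applied to expectations after invoking Jensen's inequality on $2^L$. A more delicate issue is that the per-long-query cost $g(n)$ could a priori depend on algorithmic choices; here one must argue that any long query provably requires traversing a path from an already-explored vertex to a bottom-layer vertex at distance at least $h$ from both ends of the current candidate range, which delivers a single instance-independent lower bound $g(n)$ that can be factored out of the sum.
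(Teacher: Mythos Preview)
Your argument correctly handles the deterministic, worst-case-sink setting, but the extension to expected cost has a genuine gap. The inequality $k_{\text{final}} \geq n/2^L - Sh$ is \emph{not} a pathwise statement for a fixed sink: a long query can shrink the interval far below $k/2$ whenever the sink happens to lie on the smaller side, and a short query can eliminate far more than $h$ candidates when the sink lies on the near side. The inequality only holds against an adversary who, after seeing the query, places the sink on the larger side each time; that is, it bounds the interval size for the \emph{worst} sink consistent with the answers so far, not for an arbitrary fixed sink. A one-line counterexample: a bisecting algorithm that hits the sink on its very first query has $L=1$, $S=0$, yet $n/2^L - Sh = n/2 \gg 1$. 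Consequently neither ``execute per realization and then average'' nor ``Jensen on $2^L$'' goes through, because the inequality you would be averaging simply fails for most sink positions. What you have actually established is $\min_{\text{det alg}}\max_{\text{sink}}\text{cost}=\omega(\log n)$, but the theorem (which feeds into the $\avgvolume$ separation) needs $\min_{\text{rand alg}}\max_{\text{sink}}\mathbb{E}[\text{cost}]=\omega(\log n)$, and the former does not imply the latter. A minor side point: your parenthetical ``distance at least $h$ from both ends, so the surviving half has size at least $k/2$'' is a non sequitur even in the adversarial reading; the larger of two parts is always at least $k/2$ regardless of $h$.

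The paper closes precisely this gap with a Yao-style averaging rather than a pathwise bound. After the long-query phase leaves $\Theta(n^{3/4})$ candidates, it takes the sink to be uniform over those candidates, and for any deterministic strategy it exhibits a set of $n^{1/4}$ sink positions on which the strategy is forced to perform at least $(n^{3/4}-n^{1/4})/\log n$ short queries; multiplying this conditional cost by the probability $n^{1/4}/n^{3/4}$ already gives an $\omega(\log n)$ contribution to the expected cost. Your framework can be repaired along the same lines---keep the long/short decomposition, but replace the pathwise shrinkage inequality with an expected-cost bound under a uniform sink---yet as written it is missing exactly the averaging step that carries the argument from worst case to expectation.
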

\begin{proof}
After performing a certain number of long queries as analyzed in Lemma~\ref{lem:alg_bound_long_queries}, the algorithm may no longer use long queries, and it only has access to short queries to find the sink among \(n^{3/4}\) candidates.
Assume a worst-case execution: each time the algorithm performs a query, the sink is revealed to be in the biggest of the two halves -- and halt this execution when there are \(n^{1/4}\) remaining candidates. To get to this point, there must have been at least \[\dfrac{n^{3/4} - n^{1/4}}{\log N}\] queries performed. Observe further that the probability that such a worst-case occurs is the probability that the actual sink is among the remaining \(n^{1/4}\) vertices.
This worst-case execution alone causes at least \[\dfrac{n^{3/4} - n^{1/4}}{\log n}\dfrac{n^{1/4}}{n^{3/4}} = \dfrac{n^{1/4}}{\log n}(1 + o(1)) = \omega(\log n)\] vertices to be explored on average. The total cost on average is at least as high as the weighted cost attributed solely to a few worst-case instances. Hence the entire algorithm cannot find the sink without exploring at least \(\omega(\log n)\) vertices on average.
\end{proof}

Thus Theorem~\ref{thm:alg_lb_queries} shows that no algorithm can find a mend for Problem~\ref{prob:path_to_sink} with expected volume \(O(\log n)\). Since we have shown in Section~\ref{sec:minvolume_of_path_to_sink} that the \(\minvolume\) is \(\Theta(\log n)\), this shows a separation between \(\minvolume\) and \(\avgvolume\). Sections~\ref{sec:path_to_sink_general_graphs} and~\ref{sec:unorient} show that unlike the separation shown in Section~\ref{sec:separation_min_avg}, this one does not need to rely on a promise.

This proves that there is a separation between \(\minvolume\) and \(\avgvolume\), since we have constructed a problem that one can solve by modifying \(O(\log n)\) labels with a complete view of the graph and yet requires \(\omega(\log n)\) queries when one only has access to randomness.

Note that many of the bounds we have used here are very loose, most notably the actual cost of a long query is estimated to be \(\Omega(\log \log N)\) rather than \(\omega(1)\). In practice, the average number of vertices visited seems to be closer to \(\Omega(\log^2 N)\), even assuming that the tree has the correct structure. Further, it is not clear whether a randomized algorithm could even achieve such a bound if it also had to take into account inputs that do not satisfy the local constraints of \(\tree\).

\subsection{Extension to unoriented graphs}\label{sec:unorient}

The results from Sections~\ref{sec:minvolume_of_path_to_sink} and~\ref{sec:avgvolume_of_path_to_sink} extend to unoriented graphs simply by encoding the orientation within the structure of the graph. One possible encoding of the orientation is the following: given an unoriented graph \(G\), let \emph{leaves} be the vertices with degree exactly $1$. To interpret \(G\) as an oriented graph \(\overline{G}\) with maximum degree \(\Delta\), extract the subset \(\overline{V}\) of vertices that have at least \(\Delta+1\) leaves as neighbors. This provides the vertex set of \(\overline{G}\). An edge between \(\overline{u}\) and \(\overline{v}\) is added if between the corresponding vertices \(u\) and \(v\) in \(G\), there is a path of length exactly 2, connected to at most one leaf. If there is such a leaf on the vertex of the path closest to \(u\), consider the edge oriented from \(\overline{v}\) to \(\overline{u}\). Otherwise, if there is no leaf then the edge is not oriented. Figure~\ref{fig:unorient} visualizes this transformation.

A labeling \(\lambda\) of \(G\) is interpreted as a labeling \(\overline{\lambda}\) of \(\overline{G}\) by defining \(\overline{\lambda}(\overline{v}) \coloneqq \lambda(v)\). Any local property of \(\overline{\lambda}\) can be computed locally on \(\lambda\). Problem~\ref{prob:path_to_sink_unoriented} shows how this encoding can be used to define a version of Problem~\ref{prob:path_to_sink} that works on unoriented graphs.

\begin{problemdef}
    \caption{Path to the sink -- unoriented graphs\problemcaptionkludge}
    \label{prob:path_to_sink_unoriented}
    \vspace{-\topsep}\begin{description}[noitemsep]
        \item[Input:] An unoriented graph \(G\)
        \item[Labels:] \textsf{red} and \textsf{black} (same as in Problem~\ref{prob:path_to_sink} and Problem~\ref{prob:path_to_sink_oriented})
        \item[Task:] Label \(G\) by \(\lambda\) so that \(\overline{\lambda}\) is a labeling of \(\overline{G}\) that satisfies the constraints of Problem~\ref{prob:path_to_sink_oriented}
    \end{description}\vspace{-1.1\topsep}
\end{problemdef}

\begin{figure}[htb]
    \centering\includegraphics[width=10cm]{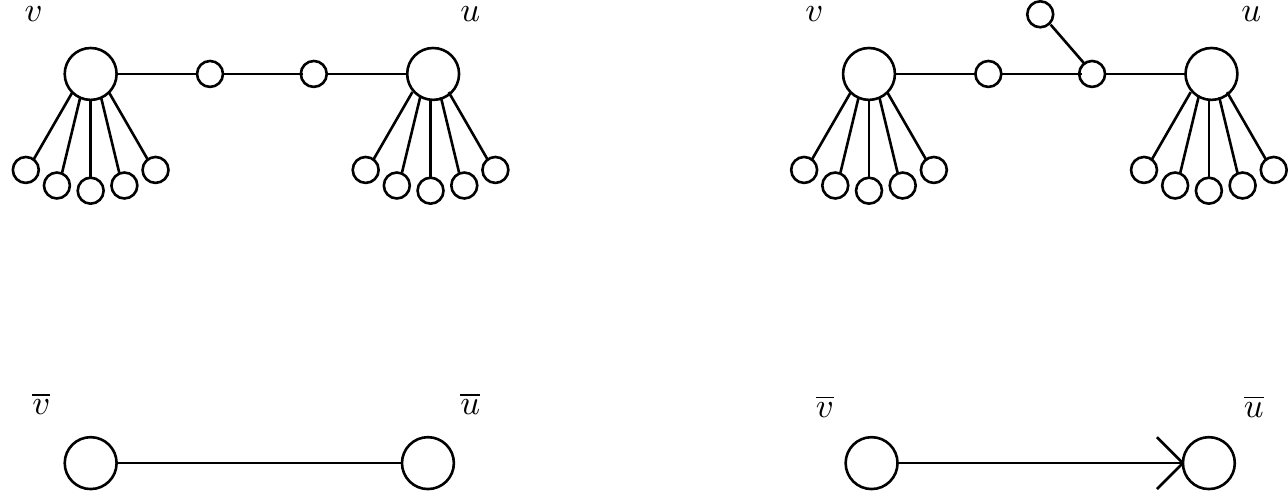}
    \caption{Oriented interpretation of an unoriented graph. Top: pairs of vertices in \(G\); bottom: their interpretation in \(\overline{G}\). This mapping can be computed locally.}
    \label{fig:unorient}
\end{figure}

\section{Landscape of mending volume}
\label{sec:landscape_appendix}

Table~\ref{tab:complexities_overview} provides an overview of the landscape only for the existential mending volume. We can perform the same work for the other measures that we have defined, which provides the results summarized in Tables~\ref{tab:complexities_complete_1}-\ref{tab:complexities_complete_3}. The results for the mending radius follow immediately from Lemma~\ref{lem:radius_equivalence} which makes the results from~\cite{local_mending} applicable to \(\radius\). 

\begin{table}[tb]
    \newcommand{\ok}{{\textcolor{blue!80!black}{\checkmark}}}
    \newcommand{\ko}{{\textcolor{red!80!black}{\textbf{\texttimes}}}}
    \newcommand{\uk}{{\textcolor{black}{?}}}
    \caption{An overview of the landscape of the existential mending volume ($\minvolume$), the expected mending volume (\(\avgvolume\)), the deterministic mending volume (\(\maxvolume\)), and the mending radius (\(\radius\)) for LCL problems on the classes of paths, trees and general graphs. Here \ok{} denotes that LCL problems with this mending volume exist, \ko{} denotes that such LCL problems cannot exist, and \uk{} denotes an open question. We assume $k>1$ and $0<\alpha<1$.}
    \label{tab:complexities_complete}
    \begin{subtable}{1\textwidth}
        \caption{Landscape of \(\minvolume\) and \(\avgvolume\)}
        \label{tab:complexities_complete_1}
        \centering
        \begin{tabular}{lccccccc}
            \toprule
            Setting & \multicolumn{7}{l}{Possible existential and expected mending volumes} \\ \cmidrule{2-8}
        & \(O(1)\) & \ldots & \(\Theta(\log n)\) &  \(\Theta(\log^k n)\) & \ldots & \(\Theta(n^{\alpha})\) & \(\Theta(n)\) \\
            \midrule
            Paths and cycles & \ok  & \ko  & \ko  & \ko  & \ko  & \ko  & \ok \\
            Rooted trees     & \ok  & \ko  & \ok  & \ok  & \ko  & \ok  & \ok \\
            Trees            & \ok  & \ko  & \ok  & \ok  & \uk  & \ok  & \ok \\
            General graphs   & \ok  & \ko  & \ok  & \ok  & \uk  & \ok  & \ok \\
            \bottomrule
        \end{tabular}
    \end{subtable}
    \newline
    \vspace*{7pt}
    \newline  
    \begin{subtable}{1\textwidth}
        \caption{Landscape of \(\maxvolume\)}
        \label{tab:complexities_complete_2}
        \centering
        \begin{tabular}{lccccccc}
            \toprule
            Setting & \multicolumn{7}{l}{Possible deterministic mending volumes} \\ \cmidrule{2-8}
            & \(O(1)\) & \ldots & \(\Theta(\log n)\) & \(\Theta(\log^k n)\) & \ldots & \(\Theta(n^{\alpha})\) & \(\Theta(n)\) \\
            \midrule
            Paths and cycles & \ok  & \ko  & \ko  & \ko  & \ko  & \ko  & \ok \\
            Rooted trees     & \ok  & \ko  & \ko  & \ko  & \ko  & \ko  & \ok \\
            Trees            & \ok  & \ko  & \ko  & \ko  & \ko  & \ko  & \ok \\
            General graphs   & \ok  & \ko  & \ko  & \ko  & \ko  & \ko  & \ok \\
            \bottomrule
        \end{tabular}
    \end{subtable}
    \newline
    \vspace*{7pt}
    \newline        
    \begin{subtable}{1\textwidth}
        \caption{Landscape of \(\radius\)}
        \label{tab:complexities_complete_3}
        \centering
        \begin{tabular}{lccccccc}
            \toprule
            Setting & \multicolumn{7}{l}{Possible mending radius} \\ \cmidrule{2-8}
            & \(O(1)\) & \ldots & \(\Theta(\log n)\) & \(\Theta(\log^k n)\) & \ldots & \(\Theta(n^{\alpha})\) & \(\Theta(n)\) \\
            \midrule
            Paths and cycles & \ok  & \ko  & \ko  & \ko  & \ko  & \ko  & \ok \\
            Rooted trees     & \ok  & \ko  & \ok  & \ko  & \ko  & \ko  & \ok \\
            Trees            & \ok  & \ko  & \ok  & \ko  & \ko  & \ko  & \ok \\
            General graphs   & \ok  & \ko  & \ok  & \uk  & \uk  & \ok  & \ok \\
            \bottomrule
        \end{tabular}
    \end{subtable}
\end{table}

\subsection{Landscape of \avgvolumetitle}\label{sec:landscape_avgvolume}

The proof of Theorem~\ref{thm:unbalance} shows that for propagation problems, the expected mending volume is equivalent to the existential mending volume. The same constructions apply to encode complexity results obtained in infinite rooted trees to general trees and general graphs.

Although we have provided a problem for which \(\minvolume = o(\avgvolume)\) in Section~\ref{sec:separation_min_avg_general}, this does not imply that the possible complexities are different, merely that they are obtained for different problems. We conjecture that Problem~\ref{prob:path_to_sink} has \(\avgvolume = \Theta(\log^2 n)\), which is a complexity that exists for the \(\minvolume\) on other problems.

Thus to the best of our knowledge, \(\avgvolume\) exhibits the same possible complexities as \(\minvolume\).

\subsection{Landscape of \maxvolumetitle}\label{sec:landscape_maxvolume}

The \(\maxvolume\) from Definition~\ref{def:complexity_generalized} can be shown to be a deterministic volume. Indeed if all transition probabilities are \(\mender_{G,\lambda,v}(W,W') \in \set{0,1}\) then so is \(\mender^*_{G,\lambda,v}(\set{v},W) \in \set{0,1}\). The Markov chain collapses to a deterministic computational process that decides which vertices to explore next.

Our goal is to show that \(\maxvolume\) is either \(O(1)\) or \(\Theta(n)\) on most classes of graphs. This is stated by Theorem~\ref{thm:maxvolume_trivial}.

We first prove that knowledge of the size of the instance does not increase the computational power. This means that the definition is equivalent to its variant where the next step chosen by \(\mender\) can depend on the size of the input graph, \(n\). This invariance under the size of the instance is essential for Theorem~\ref{thm:maxvolume_trivial}.

\begin{lemma}[Knowledge of the size of the instance is useless]\label{size_useless}
    The version \(\maxvolume'\) of the deterministic volume where \(\mender_{G,\lambda,v}\) can depend on \(n\) the size of the instance is asymptotically equivalent to \(\maxvolume\).
\end{lemma}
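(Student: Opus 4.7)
The direction $\maxvolume'(\problem) \le \maxvolume(\problem)$ is immediate: any non-$n$-aware mender is a legal $n$-aware one that simply ignores the size advice. For the reverse inequality, let $\mender$ be an $n$-aware mender witnessing $\maxvolume'(\problem) = T$. I would construct a non-$n$-aware mender $\mender^\star$ by a doubling simulation. For $k = 0, 1, 2, \ldots$, phase $k$ of $\mender^\star$ starts a fresh simulation of $\mender$ pretending that the size advice is $2^k$: it reinitialises the simulated explored set to $\{v\}$ but accumulates every vertex actually queried into a persistent global set $W$, and it is capped at $T(2^k)$ queries. After each query, $\mender^\star$ locally checks whether $W$ already contains a mend and halts if so; this check is local because the verifier is.

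Correctness follows immediately: Progress is automatic since $W$ only grows, Locality is inherited because every decision reads only the local view together with the fixed constant $2^k$, and Termination is enforced by the explicit halting check. For the efficiency bound on a true-size-$n$ input, let $k^\star := \lceil \log_2 n \rceil$. By the end of phase $k^\star$ the accumulated set satisfies $|W| \le \sum_{k=0}^{k^\star} T(2^k)$, which telescopes to $O(T(n))$ when $T$ is polynomial and is at most $O(T(n)\log n)$ in general.

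The main obstacle is that the hypothesis bounds $\mender$'s exploration only when the size advice matches the true graph size, whereas in phase $k^\star$ we feed advice $2^{k^\star}$ to an input of size $n$ which is merely within a factor of two. I would resolve this by coupling two observations. First, our halting criterion is decoupled from the simulated mender's internal termination, so $\mender^\star$ stops as soon as $W$ genuinely contains a mend, regardless of what the simulation itself reports. Second, by locality, the simulated trajectory (as a sequence of vertex queries) on our size-$n$ instance coincides with the trajectory that $\mender$ with advice $2^{k^\star}$ would produce on any size-$2^{k^\star}$ graph sharing the same local view of $v$, so the bound $T(2^{k^\star})$ transfers to our setting and phase $k^\star$ indeed lands on a mend within its budget.

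Finally, the polylogarithmic slack in the telescoping sum is harmless in the context of this lemma: Section~\ref{sec:landscape_maxvolume} shows that the landscape of $\maxvolume$ collapses to $\{O(1), \Theta(n)\}$ on every graph class considered, and such a gap cannot be bridged by a $\log n$ factor. Hence $\maxvolume$ and $\maxvolume'$ coincide asymptotically in the sense required, yielding the claimed equivalence.
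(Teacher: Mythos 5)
Your simulation has the right overall shape (guess the size, cap each phase, accumulate all queried vertices, and halt as soon as the accumulated set contains a mend), and the first direction is handled the same way as in the paper. The genuine gap is in what gets doubled. Because your phase $k$ guesses size $2^k$ and is capped at $T(2^k)$ queries, the total work is $\sum_{k\le \lceil\log_2 n\rceil} T(2^k)$, and this sum is geometric only when $T$ grows polynomially. For sub-polynomial $T$ --- e.g.\ $T(n)=\Theta(\log n)$, which is precisely the regime where the deterministic volume is interesting --- the sum is $\Theta(\log^2 n)$, so you only obtain $\maxvolume \le O(\maxvolume'\cdot\log n)$ rather than the constant-factor bound the lemma asserts. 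The paper's proof avoids this by doubling the \emph{budget} rather than the size guess: the next guess is $m \gets f^{-1}(2\cdot f(m))$, so the per-phase caps $f(m_0),f(m_1),\dotsc$ themselves form a geometric sequence and the total exploration is at most $4f(n)$, a constant-factor overhead for every increasing computable bound $f$.

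Your attempt to absorb the $\log n$ slack by invoking the $\{O(1),\Theta(n)\}$ collapse of Theorem~\ref{thm:maxvolume_trivial} does not close this gap. First, the paper explicitly positions the present lemma as a prerequisite for that collapse (``this invariance under the size of the instance is essential for Theorem~\ref{thm:maxvolume_trivial}''), so leaning on the collapse here is circular. Second, even if one grants the collapse for the oblivious measure $\maxvolume$, it says nothing about $\maxvolume'$: in the case $\maxvolume=\Theta(n)$ your two inequalities only yield $\Omega(n/\log n)\le\maxvolume'\le O(n)$, which does not pin $\maxvolume'$ to $\Theta(n)$, so asymptotic equivalence up to constants is still not established. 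The repair is small: re-index your phases by budget rather than by size, running each phase with cap $2^j$ and size guess $\max\{m : T(m)\le 2^j\}$, so that the caps telescope geometrically unconditionally --- this is exactly the paper's construction.
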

\begin{proof}
    This proof uses a similar technique to one featured in~\cite{removing_global_assumptions}: if an upper bound on the size of the instance is necessary for the execution of the algorithm, we can simulate the algorithm by guessing increasingly large values of the size of the instance until the algorithm executes correctly.

    Assume that a problem has \(\maxvolume' = \Theta(g(n))\) where \(n\) is known to the mender \(\mender\). Assume further that \(g\) is upper bounded by some known \(f\), where \(f\) is computable and increasing function. Let \(f^{-1}(y)\) be the single value \(x\) such that \(f(x-1) < y \leq f(x)\). Note that this is computable as well.
    Algorithm~\ref{alg:simulation} presents a strategy that computes a mend of \(\lambda\) at \(x\) without the knowledge of \(n\), by guessing the size of the graph by successively increasing it.

    \begin{algorithm}[tb]
        \begin{algorithmic}[1]
            \State \(m \gets 1\)
            \Repeat
                \State simulate an algorithm \(A\) computing \(\maxvolume'\) assuming size \(m\)
                \If{\(\maxvolume'\) explores more than \(f(m) + 1\) vertices}
                    \State abort the simulation
                    \State \(m \gets f^{-1}(2 \cdot f(m))\)
                \EndIf
            \Until{\(A\) halts naturally}
        \end{algorithmic}
        \caption{A simulation without knowledge of the size of the graph}
        \label{alg:simulation}
    \end{algorithm}

    The main loop from Algorithm~\ref{alg:simulation} must terminate since, eventually, \(m \geq n\), and a mend is found for the first such \(m\). Let \(m_0,m_1,\dotsc,m_k\) be the successive values of \(m\) at each iteration. Observe that \(f(m_{i+1}) \geq 2\cdot f(m_i) + 1\). Thus at each iteration the new explored vertices are as many as all previously visited vertices, so the total number of vertices explored does not exceed
    \begin{align*}
        2 \times f(m_k)
        &\leq 4 \cdot f(m_{k-1}) &&\triangleright\text{ by definition of \(m_i\)} \\
        &\leq 4 \cdot f(n). &&\triangleright\text{ since \(f\) is increasing}
    \end{align*}
    Thus, only at a cost of a constant multiplicative factor, we can execute an algorithm that requires the knowledge of \(n\) in a context where \(n\) is unknown. Or equivalently, where the result must be independent of \(n\). This means, that the mend computed by \(\maxvolume'\) is independent of the size of the graph and thus equivalent to \(\maxvolume\).
\end{proof}

We can now establish a complete landscape of the deterministic mending volume in Theorem~\ref{thm:maxvolume_trivial}.
\begin{theorem}[\(\maxvolume\) is trivial]\label{thm:maxvolume_trivial}
    For any problem on paths, trees, or general graphs, \(\maxvolume\) is either \(\Theta(1)\) or \(\Theta(n)\).
\end{theorem}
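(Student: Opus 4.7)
The plan is to exploit Lemma~\ref{size_useless} and show that $\maxvolume(\problem) = \omega(1)$ implies $\maxvolume(\problem) = \Omega(n)$. Together with the trivial bound $\maxvolume \le n$, this yields the dichotomy $\Theta(1)$ versus $\Theta(n)$.

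By Lemma~\ref{size_useless}, we may assume the deterministic mender $\mender$ realizing the worst-case bound makes all of its decisions without access to $n$. Since $\mender$ is deterministic, its Markov chain collapses and the set $W$ of vertices it explores on an instance $(G,\lambda,v)$ is a fixed function of $(G,\lambda,v)$. Suppose now that $\maxvolume(\problem)$ is not $O(1)$; then for every integer $k$ there exists an instance on which $\mender$ explores some $W$ with $|W| \ge k$.

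The heart of the argument is an \emph{extraction lemma}: from any such instance $(G,\lambda,v)$ one can build an instance $(G',\lambda',v)$ of size $O(|W|)$ in the same graph class on which $\mender$ still explores at least $|W|$ vertices. Granting this, $\maxvolume(O(k)) \ge k$ for arbitrarily large $k$, hence $\maxvolume(n) = \Omega(n)$ as required. To build $(G',\lambda')$, I let $r$ be the locality radius of the verifier and start from the induced subgraph of $G$ on $\neigh_r^G(W)$, which has size $|W| \cdot \Delta^{O(r)} = O(|W|)$ since $\Delta$ and $r$ are constants. For paths this subgraph is a contiguous subpath, for trees it is a subtree, and for general graphs the induced subgraph itself already lies in the required family. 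Setting $\lambda'$ to agree with $\lambda$ on this subgraph preserves every radius-$r$ local view around $W$, so $\mender$ makes exactly the same decisions on $(G',\lambda',v)$ as on $(G,\lambda,v)$ and visits the same $W$.

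The main obstacle is the termination condition: $\mender$ halts as soon as the explored set contains a mend, and in $G'$ the ``softer'' boundary could allow mends that were not available in $G$, causing $\mender$ to halt on a proper prefix of $W$. I would control this by padding $G'$ with an additional layer of vertices beyond $\neigh_r^G(W)$, whose labels are chosen from a finite catalogue of local views occurring in $G$ near the boundary of $W$ so that the ``mendability'' constraints seen by every vertex of $W$ in $G'$ are exactly those seen in $G$. Because $|\labels|$ and $r$ are constants, this padding adds only $O(1)$ extra vertices per boundary vertex and keeps $|G'| = O(|W|)$. All remaining steps---checking that $\lambda'$ is a valid partial solution, that mends in $G'$ restrict to mends in $G$, and that the extracted graph fits the required family of paths, trees, or general graphs---are then straightforward consequences of locality.
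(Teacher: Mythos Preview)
Your overall strategy is the same as the paper's: assume \(\maxvolume = \omega(1)\), pass from a large instance \((G,\lambda,v)\) with explored set \(W\) to the induced subgraph on \(\neigh_r^G(W)\), argue that the deterministic mender behaves identically there, and conclude that the explored set has size \(\Omega(n)\) on instances of arbitrarily large size. The paper does this in two sentences, using exactly your extraction \(N_i \coloneqq \neigh_r^{G_i}(M_i)\).

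Where you diverge is in the ``main obstacle'' paragraph, and here you are making extra work for yourself. The early-termination worry does not arise once you combine the \textsf{Termination} and \textsf{Locality} clauses of Definition~\ref{def:mender}. For any prefix \(W' \subsetneq W\) of the exploration, \(\neigh_1^G(W') \subseteq \neigh_r^G(W) = G'\), so the radius-\(1\) view of \(W'\) is identical in \(G\) and in \(G'\). By \textsf{Locality}, the halting decision \(\mender(W',W')\) is the same in both graphs; since \(\mender\) does not halt at \(W'\) on \(G\), it does not halt at \(W'\) on \(G'\) either. If a mend contained in \(W'\) existed on \(G'\), \textsf{Termination} would force \(\mender\) to halt there, contradicting the previous sentence---so no such mend exists, and no padding is required. (This is exactly what the paper means by ``\(N_i\) is indistinguishable from \(G_i\) on all radius-\(r\) neighborhoods of elements of \(M_i\)''.) Your padding construction, by contrast, is not fully specified: ``labels chosen from a finite catalogue of local views occurring in \(G\)'' does not explain how to guarantee that mendability is preserved, nor why the padded object stays in the class of paths or trees.

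Two minor remarks. First, you write ``the deterministic mender \(\mender\) realizing the worst-case bound''; for the \(\Omega(n)\) direction you must argue for \emph{every} valid mender, not just an optimal one, so fix an arbitrary valid \(\mender\) and run the extraction on it. Second, the paper phrases the extraction in terms of the set of modified vertices, while you use the explored set; your choice matches the definition of \(\maxvolume\) more directly, but the two are equivalent up to the constant factor \(\Delta^r\).
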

\begin{proof}
    Assume that \(\maxvolume = \omega(1)\), then there is a sequence \((G_i)_{i\in\NN}\) of graphs for which there are eventually arbitrarily many labels that are modified by the deterministic mender:
    \(|M_i| \to \infty\) as \(i \to \infty\), where \(M_i\) is the set of vertices of \(G_i\) whose label is modified.
    Define \(N_i\) to be the radius-\(r\) neighborhood of \(M_i\), and note that the size of \(N_i\) is at most \(|M_i| \times \Delta^r\). Since \(N_i\) is indistinguishable from \(G_i\) on all radius-\(r\) neighborhoods of elements of \(M_i\), the same mending process operating on \(N_i\) will also explore exactly \(M_i\). Hence for the sequence \((N_i)_{i\in\NN}\) of graphs of arbitrary size, the mend is of size at least \[|M_i| \geq \dfrac{|N_i|}{\Delta^r} = \Omega(n).\]

    This proves a gap in deterministic volume: on the usual classes of paths, cycles, trees, and general graphs, there exists no problem with \(\maxvolume\) between \(\omega(1)\) and \(o(n)\).
\end{proof}

\begin{corollary}[A deterministic mender explores the entire neighborhood]
    \(\maxvolume\) is asymptotically equivalent to \(|\neigh_{\radius}|\).
\end{corollary}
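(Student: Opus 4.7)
The plan is to combine Theorem~\ref{thm:maxvolume_trivial} (the $\{\Theta(1),\Theta(n)\}$ dichotomy for $\maxvolume$) with the hierarchy of complexities in Equation~\ref{eq:complexity_hierarchy} by a short case analysis. One direction, $\maxvolume = O(|\neigh_{\radius}|)$, is already recorded in Equation~\ref{eq:complexity_hierarchy} via the naive mender that explores the full radius-$\radius$ ball. So the real content is the reverse inequality $|\neigh_{\radius}| = O(\maxvolume)$, and we get it essentially for free from the fact that the only allowed regimes of $\maxvolume$ are the two extreme ones.

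First I would split on the two cases from Theorem~\ref{thm:maxvolume_trivial}. If $\maxvolume = \Theta(n)$, then combining $\maxvolume \le |\neigh_{\radius}|$ from Equation~\ref{eq:complexity_hierarchy} with the trivial upper bound $|\neigh_{\radius}| \le n$ immediately yields $|\neigh_{\radius}| = \Theta(n) = \Theta(\maxvolume)$. If instead $\maxvolume = \Theta(1)$, then $\radius \le \maxvolume$ (again from Equation~\ref{eq:complexity_hierarchy}) forces $\radius = O(1)$; since LCL problems are defined on bounded-degree graphs with $\Delta = O(1)$, a ball of constant radius has at most $\Delta^{O(1)} = O(1)$ vertices, so $|\neigh_{\radius}| = O(1)$. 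Together with $\maxvolume \le |\neigh_{\radius}|$ this gives $|\neigh_{\radius}| = \Theta(1) = \Theta(\maxvolume)$.

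Combining the two cases yields $\maxvolume = \Theta(|\neigh_{\radius}|)$, which is exactly the claim of the corollary. There is no hard step: the whole argument is just a two-line case analysis, and I expect the only thing to be careful about is invoking the correct inequalities from the hierarchy in the right direction, and explicitly citing that $\Delta$ is a constant so that constant radius translates to constant neighborhood size.
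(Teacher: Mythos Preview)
Your argument is correct, but it proceeds differently from the paper's. The paper does the case split on $\radius$, invoking the known classification of mending radii from \cite{local_mending}: on paths, cycles, and trees, $\radius$ is one of $\Theta(1)$, $\Theta(\log n)$, or $\Theta(n)$. For each of these three cases it then reads off both $\maxvolume$ and $|\neigh_{\radius}|$ and checks they match. You instead split on $\maxvolume$ using Theorem~\ref{thm:maxvolume_trivial} directly, and pull $\radius$ and $|\neigh_{\radius}|$ along via the hierarchy in Equation~\ref{eq:complexity_hierarchy}. Your route is slightly more self-contained: it does not need the external trichotomy for $\radius$, only the dichotomy for $\maxvolume$ just proved plus the hierarchy and bounded degree. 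It also applies verbatim to general graphs, whereas the paper's version of the argument is phrased only for paths, cycles, and trees (the graph classes for which the $\radius$ landscape is fully known). The paper's approach, on the other hand, makes the connection to the $\radius$ landscape explicit, which is thematically closer to the surrounding discussion.
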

\begin{proof}
    From~\cite{local_mending} and the equivalence established in Lemma~\ref{lem:radius_equivalence}, we know that \(\radius\) can only be one of \(\Theta(1)\) or \(\Theta(n)\) on paths and cycles, and \(\Theta(1)\) or \(\Theta(\log n)\) or \(\Theta(n)\) on trees.

    If \(\radius\) is \(\Theta(1)\) then \(\maxvolume\) is \(\Theta(1)\) as well. If \(\radius\) is \(\Theta(\log n)\) or \(\Theta(n)\) then \(\maxvolume\) is \(\Theta(n)\). In paths, cycles, and trees, these are also the sizes of the neighborhoods of radius \(\radius\).
\end{proof}

\end{document}